\theoremstyle{definition}
\newtheorem{theorem}{Theorem}
\newtheorem{proposition}[theorem]{Proposition}
\newtheorem{lemma}[theorem]{Lemma}
\newcommand{\CEIL}[1]{\left\lceil #1\right\rceil}
\newcommand{\BIGP}[1]{\left( #1\right)}
\newcommand{\BIGBP}[1]{\left\{ #1\right\}}
\newcommand{\OP}[1]{\operatorname{#1}}
\newcommand{\MC}[1]{{\mathcal #1}}
\newcommand{\MBB}[1]{{\mathbb #1}}
\newcommand{\INCEDGESET}[1]{{E[#1]}}
\newcommand{\RECEIVED}[2]{D_{#1}(#2)}
\newenvironment{ap_lemma}[1]{\par\noindent{\bf Lemma~#1.} \em}{}
\long\def\longdelete#1{}
\title{$O(f)$ Bi-Approximation for Capacitated Covering with Hard Capacities}
\author[1]{Mong-Jen Kao}
\author[2]{Hai-Lun Tu}
\author[1,2]{D.T. Lee}
\affil[1]{Institute of Information Science, Academia Sinica, Taipei, Taiwan.}
\affil[2]{Department of Computer Science and Information Engineering, National~Taiwan~University,~Taipei,~Taiwan.}
\begin{document}

\maketitle

\begin{abstract}
We consider capacitated vertex cover with hard capacity constraints (VC-HC) on hypergraphs.
%
In this problem we are given a hypergraph $G=(V,E)$ with a maximum edge size $f$.
Each edge is associated with a demand and each vertex is associated with a weight (cost), a capacity, and an available multiplicity.
The objective is to find a minimum-weight vertex multiset such that the demands of the edges can be covered by the capacities of the vertices 
and the multiplicity of each vertex does not exceed its available multiplicity.

\smallskip

In this paper we present an $O(f)$ bi-approximation for VC-HC that gives a trade-off on the number of augmented multiplicity and the cost of the resulting cover.
In particular, we show that, by augmenting the available multiplicity by a factor of $k \ge 2$, a~cover with a cost ratio of $\BIGP{1+\frac{1}{k-1}}(f-1)$ to the optimal cover for the original instance can be obtained.
%
%
This improves over a previous result, which has a cost ratio of $f^2$ via augmenting the available multiplicity by a factor of $f$.
\end{abstract}

\section{Introduction}

The capacitated vertex cover problem with hard capacities (VC-HC) models a demand-to-service assignment scenario generalized from the classical vertex cover problem.
In this problem, we are given a hypergraph $G = (V,E \subseteq 2^V)$ 
with maximum edge size $f$, 
where each $e\in E$ satisfies $|e| \le f$ and is associated with a demand $d_e \in \MBB{R}^{\ge 0}$, and each $v \in V$ is associated with a weight (or cost) $w_v \in \MBB{R}^{\ge 0}$, a capacity $c_v \in \MBB{R}^{\ge 0}$, and an available multiplicity $m_v \in \MBB{Z}^{\ge 0}$.
The objective is to find a vertex multiset, or, cover, represented by a demand assignment function $h\colon E \times V \rightarrow \MBB{R}^{\ge 0}$, such that the following two constraints are met:
\begin{enumerate}
	\item
		$\sum_{v\in e} h_{e,v} \ge d_e$ for all $e\in E$, 
	
	\item
		$x^{(h)}_v \le m(v)$ for all $v \in V$, where $x^{(h)}_v := \CEIL{\sum_{e \colon e\in E, \hspace{2pt} v\in e} h_{e,v} / c_v}$, 
\end{enumerate}
and $\sum_{v \in V} w(v)\cdot x_h(v)$ is minimized.
%
%

\medskip

In this paper, we consider bicriteria approximation for VC-HC with augmented multiplicity constraints.
%
%
In particular, we say that a demand assignment $h$ forms an augmented $(\beta,\gamma)$-cover if it is feasible for the augmented multiplicity function $m'_v := \beta\cdot m_v$ for all $v \in V$ and
the cost ratio is at most $\gamma$ compared to the optimal assignment for the original instance.
In other words, we are allowed to use additional multiplicities of the vertices up to a factor of $\beta$.
%


%
%
%
%


%

%
\paragraph*{Background and Prior Work.}
The capacitated vertex cover generalizes vertex cover in that a demand-to-service assignment model is evolved from the original 0/1 covering model.
This transition was exhibited via several work.

\smallskip

For classical vertex cover, it is known that a $f$-approximation can be obtained by LP rounding and duality~\cite{BarYehuda1981198,doi:10.1137/0211045}.
%
Khot and Regev~\cite{Khot2008335} showed that, assuming the unique game conjecture, approximating this problem to a ratio better than $f - \epsilon$ is NP-hard for any $\epsilon > 0$ and $f \ge 2$.
%
%

\smallskip

Chuzhoy and Naor~\cite{1151271} considered VC-HC on simple graphs with unit edge demands, i.e., $|e| = 2$ and $d_e = 1$ for all $e \in E$. 
They presented a $3$-approximation for the unweighted version of this problem, i.e., $w_v = 1$ for all $v \in V$.
On the contrary, they showed that the weighted version is at least as hard as set cover,
which renders $O(f)$-approximations unlikely to exist even for this simple setting.
%
%
Due to this reason, subsequent work on VC-HC has focused primarily on the unweighted version.

\smallskip

Gandhi~et~al.~\cite{Gandhi:2006:IAA:1740416.1740428} gave a $2$-approximation for unweighted VC-HC with unit edge demand by presenting a refined rounding approach to~\cite{1151271}.
Saha and Khuller~\cite{Saha:2012:SCR:2359332.2359443} considered general edge demands and presented an $O(f)$-approximation for $f$-hypergraphs.
Cheung~et~al.~\cite{doi:10.1137/1.9781611973402.124} presented an improved approach for this problem.
They presented a $\BIGP{1+2/\sqrt{3}}$-approximation for simple graphs and a $2 f$-approximation for $f$-hypergraphs.
The gap of approximation for this problem was recently closed by Kao~\cite{Kao16VCHC}, who presented an $f$-approximation for any $f\ge 2$.
%

\smallskip

Grandoni et al.~\cite{DBLP:journals/siamcomp/GrandoniKPS08} considered weighted VC-HC with unit vertex multiplicity, i.e., $m_v=1$ for all $v \in V$, and augmented multiplicity constraints.
They presented a primal-dual approach that yields an augmented $(2,4)$-cover for simple graphs\footnote{The bicriteria approximation ratio of~\cite{DBLP:journals/siamcomp/GrandoniKPS08} is updated in the context due to the different considered models. In~\cite{DBLP:journals/siamcomp/GrandoniKPS08} each vertex is counted at most once in the cost of the cover, disregarding the number of multiplicities it needs. In our model, however, the cost is weighted over the multiplicities of each vertex.},
which further extends to augmented $(f,f^2)$-cover for $f$-hypergraphs.
This approach does not generalize, however, to arbitrary vertex multiplicities and does not entail further parametric trade-off either.
%

%
%

%


\paragraph*{Further Related Work.}

The capacitated covering problem has been studied in various forms and variations. 
%
%
%
%
When the number of available multiplicities is unlimited, this problem is referred to {soft capacitated vertex cover} (CVC).
%
%
This problem was first considered by Guha~et~al.~\cite{989540}, who gave a $2$-approximation based on primal-dual.
%
%
%
Kao~et~al.~\cite{MJKHLCDTL13,springerlink:10.1007/s00453-009-9336-x,MJKaoDissertation12} studied capacitated dominating set problem and presented a series of results for the complexity and approximability of this problem.
%
%
Bar-Yehuda et al.~\cite{doi:10.1137/080728044} considered partial CVC and presented a $3$-approximation for simple graphs based on local ratio techniques.
%

\smallskip

Wolsey~\cite{Wolsey1982} considered submodular set cover, which includes classical set cover as a special case and which relates to capacitated covering in a simplified form, and presented a $(\ln \max_S f(S) +1)$-approximation.
This approach was generalized by Chuzhoy and Naor~\cite{1151271} to capacitated set cover with hard capacities and unit demands, for which a $(\ln \delta+1)$-approximation was presented, where $\delta$ is the maximum size of the sets.
%
%

%


%

\paragraph*{Our Result and Approach.}

We consider VC-HC with general parameters and present bicriteria approximations that yields a trade-off between the number of augmented multiplicities and the resulting cost.
Our main result is the following bicriteria approximation algorithm:
%

\begin{theorem} \label{thm-augmented-cover}
For any integer $k\ge 2$, we can compute an augmented $\BIGP{k, (1+\frac{1}{k-1})(f-1)}$-cover for weighted VC-HC in polynomial time.
\end{theorem}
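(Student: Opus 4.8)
The plan is to attack this via LP rounding, since the target ratio $(1+\frac{1}{k-1})(f-1)$ strongly suggests a standard vertex-cover-style $(f-1)$ factor combined with a capacity-related slack of $\frac{1}{k-1}$ coming from the multiplicity augmentation. First I would write down the natural LP relaxation of VC-HC. The decision variables are the fractional multiplicities $x_v$ and the demand-assignment values $h_{e,v}$, subject to the demand-covering constraints $\sum_{v\in e} h_{e,v} \ge d_e$, the capacity-linking constraints $\sum_{e\ni v} h_{e,v} \le c_v\, x_v$, and the multiplicity bounds $0 \le x_v \le m_v$, minimizing $\sum_v w_v x_v$. Solving this LP gives a fractional optimum whose cost is a lower bound on the true optimum. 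The whole game is then to round the fractional solution $(x^*, h^*)$ into an integral cover that is feasible under the augmented multiplicities $k\cdot m_v$ while paying only a $(1+\frac{1}{k-1})(f-1)$ factor.

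The core rounding idea I would pursue is a threshold/filtering argument on the $h^*$ assignments. For each edge $e$, because $|e|\le f$, the demand $d_e$ is spread over at most $f$ vertices; the classical $f$-approximation intuition is that one can charge each edge to the vertices that already carry a substantial fraction of its demand. I would single out, for each edge, the vertex receiving the \emph{largest} share of its demand and treat that share specially — this is what should convert the naive factor $f$ into $f-1$, mirroring how Kao's $f$-approximation and the earlier $O(f)$ results squeeze out one unit. Concretely, I expect to define a rounded assignment $\bar h$ by scaling up the $h^*$ values by a factor tied to $k$ (roughly $\frac{k}{k-1}$, so that rounding up each $x_v$ from its fractional value loses at most a $\frac{k}{k-1}$ factor once the augmentation gives us headroom up to $k\,m_v$), and then argue that $\CEIL{\sum_{e\ni v}\bar h_{e,v}/c_v}\le k\, m_v$ using $x^*_v\le m_v$ together with the slack introduced by the augmentation. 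The $\frac{1}{k-1}$ term is precisely the overhead of turning a fractional multiplicity into an integer one when we are allowed to round within a window of width proportional to $k$.

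The key steps, in order, would be: (i) formulate and solve the LP; (ii) decompose each edge's demand, identifying a dominant assigned vertex per edge to isolate the $(f-1)$ rather than $f$ dependence; (iii) scale the fractional assignment by a factor controlled by $k$ and round the multiplicities up, verifying both feasibility of the demand constraints and that the ceiling of the rounded load stays within $k\,m_v$; and (iv) bound the cost, showing the total weight is at most $(1+\frac{1}{k-1})(f-1)$ times the LP optimum, hence at most that factor times the true optimum for the original instance. Each of these should be executable with the machinery the introduction cites.

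The main obstacle I anticipate is step (iii): reconciling the \emph{hard} capacity constraints with the rounding. Unlike soft-capacity covering, here each unit of multiplicity supplies only $c_v$ units of capacity, and rounding $x_v$ up does not automatically leave room to absorb the redistributed demand, because the demand assignment $h$ and the multiplicity $x_v$ are coupled through the capacity-linking constraint. The delicate part is arranging the rounding so that after scaling by $\frac{k}{k-1}$ and taking ceilings, the aggregate demand routed to each vertex still fits within $k\,c_v\,m_v$; this requires that the augmentation factor $k$ be spent partly on covering the ceiling-induced integrality loss and partly on the scaled-up assignment, and getting these two uses of the budget to add up correctly — rather than multiply — is where the trade-off $(1+\frac{1}{k-1})$ is genuinely earned. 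I would expect this to need a careful accounting argument, possibly a flow or matching reallocation of the fractional $h^*$ to make the per-vertex loads behave under rounding.
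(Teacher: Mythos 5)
Your proposal takes a genuinely different route from the paper --- LP rounding instead of primal--dual --- but as written it cannot be completed, and the first gap appears already at step (i). The LP you formulate omits the constraint $d_e\cdot x_v \ge h_{e,v}$, and without it the relaxation has unbounded integrality gap, augmentation or not: take a single edge $e$ with $d_e=1$ incident to a single vertex $v$ with $c_v = M$, $w_v = 1$, $m_v = 1$. Your LP admits $x_v = 1/M$, $h_{e,v}=1$ at cost $1/M$, while every integral cover (even one allowed $k\cdot m_v$ multiplicities) costs $1$, since $x^{(h)}_v = \CEIL{1/M} = 1$; the multiplicity bound is not even binding here, so no amount of augmentation closes this gap. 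The paper includes precisely this fourth constraint in LP(\ref{ILP_cdh}) and remarks that it is ``required in giving a bounded integrality gap.'' Even after repairing the LP, your steps (ii)--(iv) face the structural obstruction that makes weighted VC-HC set-cover-hard (Chuzhoy--Naor) and that confines all known rounding results (Saha--Khuller, Cheung et al., Kao) to the unweighted case: a vertex can carry a positive load while having arbitrarily small fractional value $x^*_v$, and the ceiling in $x^{(h)}_v$ then forces you to pay the full $w_v$, which is unrelated to the LP's contribution $w_v x^*_v$. Discarding such vertices (your dominant-vertex filtering) requires reassigning their demand, but under \emph{hard} capacities the residual capacity to absorb it may simply not exist; this is exactly the pending-assignment problem that you flag as your ``main obstacle'' but do not resolve, and resolving it is the entire technical content of the theorem.

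The paper never solves the primal LP. It runs a primal--dual algorithm directly on the augmented instance $\Pi' = (V,E,d,w,c,k\cdot m)$, growing a dual solution $\Psi=(y,z,g,\eta)$ for LP(\ref{ILP_dual_cdh}) while a flow-based procedure ($\OP{Self-Containment}$) resolves pending assignments so that every vertex in $V_S$ (those whose multiplicity limit is attained) receives exactly $k\, m_v c_v$ units of demand to charge against (Lemma~\ref{lemma_u_in_s_unicost}). The decisive observation, which has no analogue in your plan, is Lemma~\ref{lemma-dual-identical-feasible-region}: $m_v$ enters the dual LP only through the objective, so the dual feasible region is identical for $\Pi$ and $\Pi'$, and hence $\sum_e d_e y_e - \sum_v m_v\eta_v \le \OP{OPT}(\Pi)$ --- the cost of the solution built for the augmented instance is bounded against the optimum of the \emph{original} one. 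The factor $\frac{k}{k-1}$ then arises not from scaling an assignment but from the bound $m_v \eta_v \le \frac{1}{k}\sum_{e \in \INCEDGESET{v}} h^*_{e,v}\cdot y_e$ at vertices of $V_S$ (Lemma~\ref{lemma-VC-RHC-opt-lower-bound}), and the factor $(f-1)$ comes from a charging argument over the reassigned solution $h^*$ (Lemma~\ref{lemma-VC-RHC-heavy-light-mixed-bound}), in which demand served by $V_S$ is charged at most $f-1$ times --- not from a per-edge dominant-vertex decomposition. If you wish to pursue a rounding proof, you would need, at minimum, the fourth LP constraint plus a mechanism that pays for cheaply-fractionally-opened expensive vertices under hard capacities; the duality-based argument is what makes the weighted bicriteria bound attainable, and it is not a cosmetic choice.
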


This improves over the previous ratio of $(f, f^2)$ in~\cite{DBLP:journals/siamcomp/GrandoniKPS08} and provides a parameter trade-off on the augmented multiplicity and the quality of the solution.
In particular, the cost ratio we obtained for this bi-approximation is bounded within $\frac{3}{2}(f-1)$ for all $k\ge 2$ and converges asymptotically to $f-1$ as $k$ tends to infinity.
%
%

\smallskip

Our algorithm builds on primal-dual charging techniques combined with a flow-based procedure that exploits the duality of the LP relaxation.
The primal-dual scheme we present extends the basic framework from~\cite{springerlink:10.1007/s00453-009-9336-x,989540}, which were designed for the soft capacity model where $m_v = \infty$ for all $v$.
In contrast to the previous result in~\cite{DBLP:journals/siamcomp/GrandoniKPS08}, we employ a different way of handling the dual variables as well as the primal demand assignments that follow.
The seemingly subtle difference entails dissimilar analysis and gives a guarantee that is unavailable via their approach.
%

\smallskip

In particular, for the primal demand assignments,
we use flow-based arguments to deal with pending decisions.
This ensures that the vertices whose multiplicity limits are attained receive sufficient amount of demands to pay for their costs.
The crucial observation in establishing the bicriteria approximation factor is that the feasible regions of the dual LP remains unchanged when the multiplicity constraint is augmented.
Therefore the cost of the solution obtained via the primal-dual approach can be bounded by the optimal cost of the original instance.
Together this gives our bi-approximation result.

\smallskip

%
%
The rest of this paper is organized as follows.
In \S\ref{Preliminary} we formally define VC-HC and introduce the natural LP relaxation and its dual LP for which we will be working with.
For a better flow to present our bicriteria approximation, we first introduce our primal-dual algorithm and the corresponding analysis in~\S\ref{weighted-covering}.
%
In \S\ref{sec-augmented-covering} we establish the bi-approximation approximation ratio and prove Theorem~\ref{thm-augmented-cover}.
%
Finally we conclude in \S\ref{sec-conclusion} with some future directions for related problems.
%
%
Due to the space limit, some of the proofs are omitted from the main content and can be found
in the appendix for further reference.

\section{Problem Statement and LP Relaxation} \label{Preliminary}

Let $G=(V,E)$ denote a hypergraph with vertex set $V$ and edge set $E \subseteq 2^V$ and $f := \max_{e \in E}|e|$ denote the size of the largest hyperedge in $G$.
For any $v \in V$, we use $\INCEDGESET{v}$ to denote the set of edges that are incident to the vertex $v$. Formally, $\INCEDGESET{v} := \{e\colon e\in E \text{ such that } v \in e\}$.
This definition extends to set of vertices, i.e., for any $A \subseteq V$,  i.e., $\INCEDGESET{A} := \bigcup_{v\in A}\INCEDGESET{v}$.
%
%
%
%
%
%

%

\subsection{Capacitated Vertex Cover with Hard Capacities (VC-HC)}
%
%
In this problem we are given a hypergraph $G = (V,E \subseteq 2^V)$, where each $e\in E$ is associated with a demand $d_e \in \MBB{R}^{\ge 0}$ and each $v \in V$ is associated with a weight (or cost) $w_v \in \MBB{R}^{\ge 0}$, a capacity $c_v \in \MBB{R}^{\ge 0}$, and its available multiplicities $m_v \in \MBB{Z}^{\ge 0}$.
%
%

\smallskip

By a demand assignment we mean a function $h\colon E \times V \rightarrow {\mathbb{Z}}^{\ge 0}$, where $h_{e,v}$ denotes the amount of demand that is assigned from edge $e$ to vertex $v$.
For any $v \in V$, we use $\RECEIVED{h}{v}$ to denote the total amount of demand vertex $v$ has received in $h$, i.e., $\RECEIVED{h}{v} = \sum_{e \in \INCEDGESET{v}}h_{e,v}$.

\smallskip

The corresponding multiplicity function, denoted $x^{(h)}$, is defined to be 
$x^{(h)}_v = \CEIL{\RECEIVED{h}{v} / c_v}$.
%
%
%
%
A demand assignment $h$ is \emph{feasible} if 
		$\sum_{v \in e}h_{e,v} \ge d_e$ for all $e\in E$ and
%
		$x^{(h)}_v \le m_v$ for all $v \in V$.
%
In other words, the demand of each edge is fully-assigned to (fully-served by) its incident vertices and the multiplicity of each vertex does not exceed its available multiplicities.
%
%
%
The \emph{weight (cost)} of 
$h$, denoted $w(h)$, is defined to be $\sum_{v \in V} w_v\cdot x^{(h)}_v$.

\smallskip

%

%

%
Given an instance $\Pi = (V, E, d_e, w_v, c_v, m_v)$ 
as described above, the problem of VC-HC is to compute a feasible demand assignment $h$ such that $w(h)$ is minimized.
Without loss of generality, we assume that the input graph $G$ admits a feasible demand assignment.\footnote{By selecting all of the available multiplicities, the feasibility of $G$ can be checked via a max-flow computation.}
%


%


\paragraph*{Augmented Cover.}
Let $\Pi = (V, E, d_e, w_v, c_v, m_v)$ be an instance for VC-HC.
For any integral $\beta \ge 1$, we say that a demand assignment $h$ forms an augmented $(\beta, \gamma)$-cover if
\begin{enumerate}[\qquad (1)]
	\item
		$\sum_{v \in e}h_{e,v} \ge d_e$ for all $e \in E$.
		
	\item
		$x^{(h)}_v \le \beta\cdot m_v$ for all $v \in V$.
		
	\item
		$w(h) \le \gamma \cdot \min_{h' \in \MC{F}}w(h')$, where $\MC{F}$ is the set of feasible demand assignments for $\Pi$.
\end{enumerate}
%
%
%

%


\subsection{LP Relaxation and the Dual LP}
Let $\Pi = (V, E, d_e, w_v, c_v, m_v)$ be the input instance of VC-HC.
The natural LP relaxation of VC-HC for the instance $\Pi$ is given below in LP(\ref{ILP_cdh}). 
%
%
The first three inequalities model the feasibility constraints of a demand assignment and its corresponding multiplicity function.
The fourth inequality states that the multiplicity of a vertex cannot be zero if any demand gets assigned to it.
This seemingly unnecessary constraint is required in giving a bounded integrality gap for this LP relaxation.

\smallskip

%
\begin{center}
\fbox
{
\hspace{12pt}
\begin{minipage}{0.75\textwidth}
\vspace{-4pt}
\begin{align}
\mathrm{Minimize} \quad \sum_{v \in V}w_v\cdot x_v & & & \phantom{\forall e \hspace{2.2cm} \in E} \label{ILP_cdh}
\end{align}
\vspace{-14pt}
\begin{align*}
& \sum_{v \in e}h_{e,v} \hspace{2pt} \ge \hspace{2pt} d_e, & & \forall e \in E \notag \\
& c_v\cdot x_v - \sum_{e \in E[v]} h_{e,v} \hspace{2pt} \ge \hspace{2pt} 0, & & \forall v \in V \notag \\
& x_v \hspace{2pt} \le \hspace{2pt} m_v, & & \forall v \in V \notag \\[2pt]
& d_e \cdot x_v \hspace{2pt} - \hspace{2pt} h_{e,v} \hspace{2pt} \ge \hspace{2pt} 0,  & & \forall e \in E, \enskip v \in e  \notag \\[2pt]
& x_v, \hspace{2pt} h_{e,v} \hspace{2pt} \ge \hspace{2pt} 0, & & \forall e \in E, \enskip v \in e \notag
\end{align*}
\vspace{-14pt}
\end{minipage}\enskip}
\end{center}
%

\smallskip

The dual LP for the instance $\Pi$ is given below in~LP(\ref{ILP_dual_cdh}).
A solution $\Psi = (y_e, z_v, g_{e,v}, \eta_v)$ to this LP can be interpreted as an extended packing LP as follows:
We want to raise the values of $y_e$ for all $e \in E$. 
However, the value of each $y_e$ is constrained by $z_v$ and $g_{e,v}$ that are further constrained by $w_v$ for each $v \in e$.
The variable $\eta_v$ provides an additional degree of freedom in this packing program in that it allows higher values to be packed into $y_e$ in the cost of a reduction in the objective value.
Note that, this exchange does not always yield a better lower-bound for the optimal solution.
In this paper we present an extended primal-dual scheme to handle this flexibility.


\begin{center}
\fbox
{
\hspace{12pt}
\begin{minipage}{0.85\textwidth}
\vspace{-2pt}
\begin{align}
\mathrm{Maximize} \quad \sum_{e \in E} d_e\cdot y_e \hspace{2pt} - \hspace{2pt} \sum_{v \in V} m_v \cdot \eta_v & & & \phantom{\forall e \hspace{2.2cm} \in E} \label{ILP_dual_cdh}
\end{align}
\vspace{-14pt}
\begin{align*}
& c_v\cdot z_v \hspace{2pt} + \hspace{2pt} \sum_{e \in \INCEDGESET{v}}d_e \cdot g_{e,v} \hspace{2pt} - \hspace{2pt} \eta_v \hspace{2pt} \le \hspace{2pt} w_v, & & \hspace{-10pt} \forall v \in V \notag \\
& y_e \hspace{2pt} \le \hspace{2pt} z_v \hspace{2pt} + \hspace{2pt} g_{e,v}, & & \hspace{-10pt} \forall v \in V, \enskip e \in \INCEDGESET{v} \notag \\[3pt]
& y_e, \hspace{2pt} z_v \hspace{2pt} g_{e,v} \hspace{2pt} \eta_v \hspace{2pt} \ge \hspace{2pt} 0, & & \hspace{-10pt} \forall v \in V, \enskip e \in \INCEDGESET{v} \notag
\end{align*}
\vspace{-12pt}
\end{minipage}\enskip}
\end{center}
%

For the rest of this paper, we will use $\OP{OPT}(\Pi)$ to denote the cost of optimal solution for the instance $\Pi$.
Since the optimal value of the above LPs gives a lower-bound on $\OP{OPT}(\Pi)$ which we will be working with, we also use $\OP{OPT}(\Pi)$ to denote their optimal value in the context.

\section{A Primal-Dual Schema for VC-HC} \label{weighted-covering}

In this section we present our extended primal-dual algorithm for VC-HC.
The algorithm we present extends the framework developed for the soft capacity model~\cite{springerlink:10.1007/s00453-009-9336-x,989540}.
In the prior framework, the demand is assigned immediately when a vertex from its vicinity gets saturated.
In our algorithm, we keep some of decisions pending until we have sufficient capacity for the demands.
In contrast to the primal-dual scheme used in~\cite{DBLP:journals/siamcomp/GrandoniKPS08}, which always stores dual values in $g_{e,v}$, we store the dual values in both $g_{e,v}$ and $z_v$, depending on the amount of unassigned demand $v$ possesses in its vicinity.
This ensures that, the cost of each multiplicity is charged only to the demands it serves.

\smallskip

To obtain a solid bound for this approach, however, we need to guarantee that the vertices whose multiplicity limits are attained receive sufficient amount of demands to charge to.
This motivates our flow-based procedure $\OP{Self-Containment}$ for dealing with the pending decisions.
During this procedure, a natural demand assignment is also formed.
%
%
%
%

%
%
%
%


\subsection{The Algorithm}
\label{subsec-primal-dual}

In this section we present our extended primal-dual algorithm {\sc Dual-VCHC}.
This algorithm takes as input an instance $\Pi = (V,E,d,w,c,m)$ of VC-HC and outputs a feasible primal demand assignment $h$ together with a feasible dual solution $\Psi = \BIGP{y_v, z_v, g_{e,v}, \eta_v}$ for $\Pi$.

\smallskip

The algorithm starts with an initial zero dual solution and eventually reaches a locally optimal solution. During the process, the values of the dual variables in $\Psi$ are raised gradually and some inequalities will meet with equality.
We say that a vertex $v$ is \emph{saturated} if the inequality $c_v \cdot z_v + \sum_{e \in \INCEDGESET{v}}d_e \cdot g_{e,v} - \eta_v \le w_v$ is met with equality.

\smallskip

Let $E^\phi := \{e: e\in E, d_e > 0\}$ be the set of edges with non-zero demand and $V^\phi := \{v: v\in V, m_v\cdot c_v > 0\}$ be the set of vertices with non-zero capacity.
For each $v \in V$, we use $d^\phi(v) = \sum_{e \in \INCEDGESET{v} \cap E^\phi }d_e$ to denote the total amount of demand in $\INCEDGESET{v} \cap E^\phi$.
For intuition, $E^\phi$ contains the set of edges whose demands are not yet processed nor assigned, and $V^\phi$ corresponds to the set of vertices that have not yet saturated.

\smallskip

In addition, we maintain a set $S$, initialized to be empty, to denote the set of vertices that have saturated and that have at least one incident edge in $E^\phi$.
Intuitively, $S$ corresponds to vertices with pending assignments.
%

\medskip

The algorithm works as follows.
Initially all dual variables in $\Psi$ and the demand assignment $h$ are set to be zero.
We raise the value of the dual variable $y_e$ for each $e \in E^\phi$ simultaneously at the same rate.
To maintain the dual feasibility, as we increase $y_e$, either $z_v$ or $g_{e,v}$ has to be raised for each $v \in e$.
If $d^\phi(v) \le c_v$, then we raise $g_{e,v}$. 
Otherwise, we raise $z_v$.
In addition, for all $v \in e \cap S$, we \emph{raise} $\eta_v$ to the extent that keeps $v$ saturated.

\smallskip

When a vertex $u \in V^\phi$ becomes saturated, it is removed from $V^\phi$.
Then we invoke a recursive procedure $\OP{Self-Containment}(S\cup\{u\},u)$, which we describe in the next paragraph, to compute a pair $(S',h')$, where 
\begin{itemize}
	\item
		$S'$ is a maximal subset of $S\cup\BIGBP{u}$ whose capacity, if chosen, can fully-serve the demands in $\INCEDGESET{S'}\cap E^\phi$, and 
		
	\item
		$h'$ is the corresponding demand assignment function (from $\INCEDGESET{S'}\cap E^\phi$ to $S'$). 
\end{itemize}
If $S' = \emptyset$, then we leave the assignment decision pending and add $u$ to $S$. 
Otherwise, $S'$ is removed from $S$ and $\INCEDGESET{S'}$ is removed from $E^\phi$.
In addition, we add the assignment $h'$ to final assignment $h$ to be output.
This process repeats until $E^\phi = \emptyset$.
Then the algorithm outputs $h$ and $\Psi$ and terminates.
A pseudo-code for this algorithm can be found in Figure~\ref{algorithm_greedy_charging}.

\smallskip

We also note that, the particular vertex to saturate in each iteration is the one with the smallest value of $w^\phi(v) / \min\{c_v, d^\phi(v)\}$, where $w^\phi(v) := w_v- \BIGP{c_v \cdot z_v + \sum_{e\in \INCEDGESET{v}}d_e \cdot g_{e,v} - \eta_v}$ denotes the current slack of the inequality associated with $v \in V^\phi$.

\smallskip


%
\paragraph*{The Procedure $\OP{Self-Containment}(A,u)$.}
%
%
%
In the following we describe the recursive procedure $\OP{Self-Containment}(A,u)$.
%
It takes as input a vertex subset $A \subseteq V$ and a vertex $u \in V$ and outputs a pair $(S',\tilde{h}')$, where $S'$ is a maximal subset of $A$ whose capacity is sufficient to serve the unassigned demands in its vicinity, and $h'$ is the corresponding demand assignment.

\smallskip

First we define a directed flow-graph $\MC{G}(A)$ with a source $s^+$ and a sink $s^-$ for the vertex set $A$ as follows.
Excluding the source $s^+$ and the sink $s^-$, $\MC{G}(A)$ is a bipartite graph induced by $\INCEDGESET{A} \cap E^\phi$ and $A$.
%
For each $e \in \INCEDGESET{A} \cap E^\phi$, we have a vertex $\tilde{e}$ and an edge $(s^+,\tilde{e})$ in $\MC{G}$. 
Similarly, for each $v \in A$ we have a vertex $\tilde{v}$ and an edge $(\tilde{v},s^-)$.
For each $v \in A$ and each $e \in \INCEDGESET{v} \cap E^\phi$, we have an edge $(\tilde{e}, \tilde{v})$ in $\MC{G}$.

\smallskip

The capacity of each edge is defined as follows.
For each $e \in \INCEDGESET{A} \cap E^\phi$, the capacity of $(s^+,\tilde{e})$ is set to be $d_e$.
For each $v \in A$, the capacity of $(\tilde{v},s^-)$ is set to be $m_v\cdot c_v$.
The capacities of the remaining edges are unlimited.

\smallskip

The procedure $\OP{Self-Containment}$ works as follows.
If $u \in A$, then it computes the max-flow $\tilde{h}$ for $\MC{G}(A)$ with the additional constraint that 
$\tilde{h}(\tilde{u},s^-)$ is minimized among all max-flows for $\MC{G}(A)$.\footnote{This criterion can be achieved by imposing an additional constraint when computing the augmenting paths.}
If $u \notin A$, then it simply computes any max-flow $\tilde{h}$ for $\MC{G}(A)$.
Let $$S' = \BIGBP{v \colon v \in A \text{ such that } \tilde{h}(s^+,\tilde{e}) = d_e \text{ for all $e \in \INCEDGESET{v}\cap E^\phi$}}$$ be the subset of $A$ that is able to serve the demand in $\INCEDGESET{S'} \cap E^\phi$.
%
If $S' = A$ or $S' = \emptyset$, then it returns $(S', \tilde{h'})$, where $\tilde{h'}$ is the demand assignment induced by $\tilde{h}$.
Otherwise it returns $\OP{Self-Containment}(S',u)$.
\subsection{Properties of {\sc Dual-VCHC}}
\label{sec-properties-primal-dual}
%
Below we derive basic properties of our algorithm.
Since the algorithm keeps the constraints feasible when increasing the dual variables, we know that $\Psi$ is feasible for the dual LP for $\Pi$.
In the following, we first show that $h$ is a feasible demand assignment for $\Pi$ as well.
Then we derive properties we will be using when establishing the bi-approximation factor next section.

\paragraph*{Feasibility of the demand assignment $h$.}
We begin with 
procedure $\OP{Self-Containment}$.
%
%
%
%
Let $(S', \tilde{h}')$ be the pair returned by $\OP{Self-Containment}(S\cup\{u\},u)$.
The following lemma shows that $S'$ is indeed maximal.
%

\begin{lemma} \label{lemma_local_fully_serve}
If there exists a $B \subseteq S \cup \{u\}$ such that $B$ can fully-serve the demand in $\INCEDGESET{B} \cap E^\phi$, then $B \subseteq S'$.
\end{lemma}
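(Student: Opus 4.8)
The plan is to peel off the recursion and reduce the whole statement to a single max-flow computation. Observe first that the hypothesis ``$B$ can fully-serve the demand in $\INCEDGESET{B}\cap E^\phi$'' is an intrinsic property of $B$: it only refers to the nodes of $B$, the source edges of $\INCEDGESET{B}\cap E^\phi$, and the capacities $m_v c_v$, all of which are present in $\MC{G}(A')$ for every $A'\supseteq B$ arising in the recursion. Hence, if one level of $\OP{Self-Containment}(A,u)$ already places $B$ inside the intermediate set $S_1$ (the vertices all of whose incident $E^\phi$-edges are saturated by the computed max-flow), then $B\subseteq S_1\subseteq A$, the recursive call is $\OP{Self-Containment}(S_1,u)$ with $|S_1|<|A|$, and induction on $|A|$ finishes the proof (the cases $S_1=A$ and $S_1=\emptyset$ being the returning cases). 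So the lemma reduces to the single-step statement: in \emph{any} max-flow $\tilde h$ on $\MC{G}(A)$, every source edge $(s^+,\tilde e)$ with $e\in\INCEDGESET{B}\cap E^\phi$ is saturated; in particular the tie-breaking rule on $\tilde u$ is irrelevant here.

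To prove the single-step statement I would argue by contradiction through the residual graph. Let $X$ be the set of nodes reachable from $s^+$ in the residual graph of $\tilde h$; since $\tilde h$ is maximum, $(X,\overline X)$ is a minimum cut and $s^-\notin X$. Suppose some $e_0\in\INCEDGESET{B}\cap E^\phi$ is left unsaturated, so that $\tilde{e}_0\in X$, and collect $T:=\BIGBP{e\in\INCEDGESET{B}\cap E^\phi: \tilde e\in X}\ne\emptyset$ and $W:=\BIGBP{v\in B:\tilde v\in X}$. Because every arc $(\tilde e,\tilde v)$ has infinite capacity, each edge-node in $X$ drags all of its incident vertex-nodes into $X$; in particular $N_B(T)\subseteq W$, where $N_B(T)$ denotes the vertices of $B$ incident to $T$. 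Moreover every $v\in W$ is full: the arc $(\tilde v,s^-)$ crosses the minimum cut and is therefore saturated, so $\tilde v$ receives exactly $m_v c_v$ units.

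The crux is a trapping argument. Writing $d(T)=\sum_{e\in T}d_e$ and $c(W)=\sum_{v\in W}m_v c_v$, I claim the flow entering $W$ originates solely from $T$: if $\tilde h(\tilde e,\tilde v)>0$ for some $v\in W$, the backward residual arc $\tilde v\to\tilde e$ exists and, as $\tilde v\in X$, forces $\tilde e\in X$, i.e. $e\in T$; here it is essential that edges outside $\INCEDGESET{B}$ send no flow into $B$ at all, being incident to no vertex of $B$. Since each $v\in W$ is full, the flow entering $W$ equals $c(W)$, and by flow conservation at each $\tilde e$ together with the fact that $e_0\in T$ is unsaturated,
\begin{equation*}
c(W) \;\le\; \sum_{e\in T}\tilde h(s^+,\tilde e) \;<\; \sum_{e\in T} d_e \;=\; d(T).
\end{equation*}
On the other hand, the hypothesis that $B$ fully-serves $\INCEDGESET{B}\cap E^\phi$ is, by max-flow--min-cut, equivalent to the deficiency (Hall-type) condition $d(T')\le c(N_B(T'))$ for every $T'\subseteq\INCEDGESET{B}\cap E^\phi$. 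Applying it to $T'=T$ and using $N_B(T)\subseteq W$ yields $d(T)\le c(N_B(T))\le c(W)$, contradicting $c(W)<d(T)$. Hence no such $e_0$ exists, which proves the single-step statement and, via the reduction above, the lemma.

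The step I expect to be the main obstacle is precisely this single-step statement, because the set of source edges saturated by a maximum flow is \emph{not} invariant across different maximum flows in general; the argument must therefore exploit the full-service hypothesis on $B$ rather than merely the maximality of $\tilde h$. The key structural insight that makes it work is that arcs leaving edge-nodes have infinite capacity and that edges not incident to $B$ cannot deliver flow into $B$, which together trap the capacity of $W$ among the edges of $T$ and turn an unsaturated $B$-edge into a genuine violation of the Hall-type condition. A secondary point to handle with care is the bookkeeping of the recursion, namely checking that $B$ remains a full-service set inside the smaller instance $\MC{G}(S_1)$ so that the induction on $|A|$ is legitimate.
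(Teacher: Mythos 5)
Your proof is correct, and it shares the paper's overall skeleton while proving the key claim by a genuinely different mechanism. Like the paper, you induct over the levels $S_1 \supset S_2 \supset \cdots$ of the recursion in $\OP{Self-Containment}$, reducing everything to the single-step claim that \emph{any} maximum flow on $\MC{G}(S_i)$ saturates every source edge of $\INCEDGESET{B}\cap E^\phi$ (so that $B$ survives into $S_{i+1}$); your observation that the tie-breaking rule on $\tilde u$ is irrelevant here is also implicit in the paper. Where you diverge is in how that claim is proven. The paper takes the primal route: it ``embeds'' a full-service flow $\tilde h_B$ for $\MC{G}(B)$ into $\tilde h_i$ by cancelling the flow routed through the nodes of $\INCEDGESET{B}\cap E^\phi$ and replacing it with $\tilde h_B$, obtaining a strictly larger feasible flow and contradicting maximality of $\tilde h_i$. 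You take the dual route: from an unsaturated edge $e_0$ you grow the residual-reachability set $X$, extract the trapped pair $(T,W)$ with $N_B(T)\subseteq W$ and every $v\in W$ full, and contradict the Hall-type deficiency condition $d(T')\le c(N_B(T'))$ that is equivalent (by max-flow--min-cut) to the full-service hypothesis on $B$. The paper's exchange argument is shorter, but as stated it needs a careful reading: cancelling only the flow from $\INCEDGESET{B}\cap E^\phi$ into $B$ and superimposing $\tilde h_B$ can overload the source edges $(s^+,\tilde e)$ whenever those edge-nodes also route flow to vertices outside $B$; one must cancel \emph{all} flow through those edge-nodes (the value still strictly increases, since the replacement saturates them all). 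Your cut-based argument sidesteps this subtlety entirely, at the cost of more machinery, and has the side benefit of making explicit the deficiency characterization of ``fully-serve,'' which is reusable elsewhere in reasoning about this procedure.
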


\begin{proof}
Let $S_1, S_2, \ldots, S_k$, where $S_1 = S \cup \{u\} \supset S_2 \supset \ldots \supset S_k = S'$, denote the input of the procedure $\OP{Self-Containment}(S\cup\{u\},u)$ in each recursion.

\smallskip

Below we argue that $B \subseteq S_i$ implies that $B \subseteq S_{i+1}$ for all $1\le i<k$.
Let $\tilde{h}_B$ denote a maximum flow for the flow graph $\MC{G}(B)$.
Since $B$ can fully-serve the demand in $\INCEDGESET{B}\cap E^\phi$, we know that $\tilde{h}_B(s^+, \tilde{e}) = d_e$ for all $e \in \INCEDGESET{B} \cap E^\phi$.

\smallskip

Consider the flow function computed by $\OP{Maxflow}(\MC{G}(S_i),u)$ and denote it by $\tilde{h}_i$.
If $\tilde{h}_i(s^+, \tilde{e}) < d_e$ for some $e \in \INCEDGESET{B} \cap E^\phi$, then we embed $\tilde{h}_B$ into $\tilde{h}_i$, i.e., cancel the flow from $\INCEDGESET{B} \cap E^\phi$ to $B$ in $\tilde{h}_i$ and replace it by $\tilde{h}_B$.
%
We see that the resulting flow strictly increases and remains valid for $\MC{G}(S_i)$, which is a contradiction to the fact that $\tilde{h}_i$ is a maximum flow for $\MC{G}(S_i)$.
Therefore, we know that $\tilde{h}_i(s^+, \tilde{e}) = d_e$ for all $e \in \INCEDGESET{B} \cap E^\phi$ and the vertices of $B$ must be included in $S_{i+1}$.
This show that $B \subseteq S_i$ for all $1\le i\le k$.
\end{proof}

\noindent
The following lemma states the feasibility of this primal-dual process.

\begin{lemma} \label{lemma-primal-dual-feasibility}
$E^\phi$ becomes empty in polynomial time.
Furthermore, the assignments computed by $\OP{Self-Containment}$ during the process form a feasible demand assignment.
\end{lemma}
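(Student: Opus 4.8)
# Proof Proposal for Lemma \ref{lemma-primal-dual-feasibility}

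\textbf{Overall approach.} The plan is to prove the two assertions separately: first that the algorithm terminates in polynomial time (i.e.\ $E^\phi$ eventually becomes empty), and second that the union of the partial assignments produced across all calls to $\OP{Self-Containment}$ constitutes a feasible demand assignment for $\Pi$. For termination I would track the monotone shrinkage of $E^\phi$; for feasibility I would verify the two defining conditions of a feasible demand assignment, namely that every edge's demand is fully served and that no vertex exceeds its available multiplicity.

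\textbf{Termination / polynomial time.} First I would observe that $E^\phi$ is a finite set that only ever loses elements: edges are removed from $E^\phi$ precisely when a nonempty $S'$ is returned and $\INCEDGESET{S'}$ is deleted, and no step ever adds an edge back. So it suffices to show that $E^\phi$ strictly shrinks often enough. The main worry is the case $S' = \emptyset$, in which no edge is removed and $u$ is merely added to $S$; I would argue that each vertex can be added to $S$ at most once, so after finitely many saturations a nonempty $S'$ must eventually be produced (otherwise, at the point where every remaining vertex incident to the last undischarged demand has saturated and been placed in $S$, the feasibility assumption on $G$ forces some subset of $S$ to be able to fully serve its vicinity, contradicting $S'=\emptyset$ via Lemma~\ref{lemma_local_fully_serve}). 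Each iteration performs a constant number of max-flow computations on a graph of size polynomial in $|V|+|E|$, and the rate-raising step that determines the next vertex to saturate is a simple minimization over $V^\phi$; hence each iteration runs in polynomial time, and the number of iterations is polynomially bounded. The one subtlety I would want to pin down carefully is that the continuous dual-raising phase is resolved in closed form (the next saturation time is computed directly), so that the process is genuinely combinatorial rather than an infinite limit.

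\textbf{Feasibility of the assignment.} For the second assertion I would argue as follows. The multiplicity constraint $x^{(h)}_v \le m_v$ is guaranteed by construction: inside each call, the flow graph $\MC{G}(A)$ caps the edge $(\tilde v, s^-)$ at $m_v \cdot c_v$, so the total demand routed through any $v$ by a single flow is at most $m_v\cdot c_v$, giving $\CEIL{\RECEIVED{h}{v}/c_v}\le m_v$. Here I would note that a vertex $v$ is discharged (placed in some $S'$ and removed from $S$) exactly once across the whole execution, since once its incident edges leave $E^\phi$ it never participates in a later flow graph; this is what prevents the per-call bound from accumulating across calls. For the demand-coverage constraint, the key point is that an edge $e$ is removed from $E^\phi$ only as part of some $\INCEDGESET{S'}$ with $S'$ a set that \emph{fully serves} $\INCEDGESET{S'}\cap E^\phi$ — this is precisely the membership condition defining $S'$ (namely $\tilde h(s^+,\tilde e)=d_e$ for all such $e$) — so the assignment induced by that flow routes exactly $d_e$ units of $e$'s demand. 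Since the process halts only when $E^\phi=\emptyset$, every edge of positive demand gets fully served by exactly one of these partial assignments, and edges of zero demand are vacuously satisfied.

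\textbf{Main obstacle.} I expect the crux to be the termination argument in the $S'=\emptyset$ branch, i.e.\ ruling out an infinite sequence of saturations that never discharge any demand. The clean way to dispose of this is to combine the monotonicity of $S$ (each vertex enters at most once) with the global feasibility assumption on $G$ and the maximality guaranteed by Lemma~\ref{lemma_local_fully_serve}: once the set of saturated, pending vertices grows large enough to cover the remaining demand, Lemma~\ref{lemma_local_fully_serve} forces a nonempty self-contained subset to be extracted. Making this "large enough" statement precise — and confirming that the remaining undischarged edges always have all their incident capacity within $S\cup\{u\}$ by the time demand would otherwise stall — is the step I would spend the most care on.
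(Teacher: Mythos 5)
Your proposal is correct and takes essentially the same route as the paper's own proof: both bound the number of iterations by the fact that each iteration permanently removes the saturated vertex from $V^\phi$ (equivalently, each vertex saturates and enters $S$ at most once), both rule out stalling in the $S'=\emptyset$ branch by combining the feasibility assumption on $G$ with the invariant supplied by Lemma~\ref{lemma_local_fully_serve}, and both derive feasibility of $h$ from the capacities $m_v\cdot c_v$ in the flow graphs together with the full-service condition defining $S'$. One minor slip that does not affect the conclusion: an iteration performs up to $O(|V|)$ max-flow computations rather than a constant number, since $\OP{Self-Containment}$ may recurse once for each strict shrinkage of its vertex set.
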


\paragraph*{The cost incurred by $h$.}
Below we consider the cost incurred by the partial assignments computed by $\OP{Self-Containment}$.
Let $V_S$ denote the set of vertices that have been included in the set $S$.
For any vertex $v$ that has saturated, we use $(S'_v, h'_v)$ to denote the particular pair returned by $\OP{Self-Containment}$ such that $v \in S'_v$.
Note that, this pair $(S'_v, h'_v)$ is uniquely defined for each $v$ that has saturated.
%
%
%
%
Therefore, we know that $h_{e,v} = (h'_v)_{e,v}$ holds for any $e \in E[v]$.

\smallskip

In the rest of this section, we will simply use $h_{e,v}$ when it refers to $(h'_v)_{e,v}$ for simplicity of notations.
Recall that $\RECEIVED{h'_v}{v}$ denotes the amount of demand $v$ receives in $h'_v$.
We have the following proposition for the dual solution $\Psi = (y_e, z_v, g_{e,v}, \eta_v)$, which follows directly from the way the dual variables are raised.

\begin{proposition} \label{lemma_u_notin_s_heavy_unicost}
For any $v \in V$ such that $d^\phi(v) > c_v$ when saturated, the following holds:
\begin{itemize}
	\item
		$z_v = y_e$ for all $e \in E[v]$ with $h_{e,v} > 0$.
		
	\item
		$\eta_v > 0$ only when $v \in V_S$.
\end{itemize}
\end{proposition}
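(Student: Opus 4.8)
The plan is to establish both bullet points directly from the dynamics of how the dual variables are raised in {\sc Dual-VCHC}, focusing on a vertex $v$ for which $d^\phi(v) > c_v$ at the moment it saturates. The key structural fact the algorithm gives us is the dichotomy in the main loop: when we raise $y_e$ for an incident edge $e$, we raise $g_{e,v}$ precisely when $d^\phi(v) \le c_v$ and raise $z_v$ otherwise. So my first step is to track the condition $d^\phi(v) > c_v$ over time. Observe that $d^\phi(v) = \sum_{e \in \INCEDGESET{v} \cap E^\phi} d_e$ is monotonically non-increasing as the algorithm proceeds, because edges only leave $E^\phi$ (when their demand is served and removed), never enter it. Consequently, once $d^\phi(v)$ drops to or below $c_v$, it stays there; equivalently, the interval during which $d^\phi(v) > c_v$ holds is a prefix of $v$'s active lifetime.

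For the first bullet, I would argue that every edge $e \in E[v]$ with $h_{e,v} > 0$ contributes to $z_v$ rather than $g_{e,v}$, and that $z_v = y_e$ for all such edges. The demand $h_{e,v} > 0$ means edge $e$ was ultimately assigned to $v$ by $\OP{Self-Containment}$, hence $e$ was still in $E^\phi$ (unserved) right up until $v$ saturated and was included in the serving set $S'_v$. During the entire period $e$ was raised alongside $v$, the condition that applied was $d^\phi(v) > c_v$ (this needs the monotonicity argument above: since $d^\phi(v) > c_v$ at saturation and $d^\phi$ is non-increasing, the condition held throughout $e$'s contribution), so the algorithm raised $z_v$, never $g_{e,v}$. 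Because all active $y_e$'s are raised at a common rate from zero and $z_v$ tracks them in lockstep under constraint $y_e \le z_v + g_{e,v}$ tightened to equality with $g_{e,v}=0$, we get $z_v = y_e$ for every incident edge that is still in $E^\phi$ when $v$ saturates; in particular for every $e$ with $h_{e,v} > 0$.

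For the second bullet, I would invoke the rule governing $\eta_v$: the algorithm only raises $\eta_v$ for $v \in e \cap S$, i.e. only for vertices that currently reside in the pending set $S$. A vertex enters $S$ exactly when it saturates but $\OP{Self-Containment}$ returns $S' = \emptyset$ for it, and $V_S$ is defined as the set of all vertices ever placed in $S$. Therefore $\eta_v$ can be increased from its initial value $0$ only while $v \in S$ at some point, which forces $v \in V_S$. The contrapositive is exactly the claim: if $v \notin V_S$, then $v$ was never in $S$, so $\eta_v$ was never raised and stays $0$; hence $\eta_v > 0$ implies $v \in V_S$.

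The main obstacle I anticipate is making the first bullet fully rigorous, specifically the claim that the condition $d^\phi(v) > c_v$ governs the raising of the dual variables throughout the relevant period and not merely at the saturation instant. The subtlety is that the algorithm's branch is evaluated on the current value of $d^\phi(v)$, which changes as edges leave $E^\phi$; I must confirm that an edge $e$ with $h_{e,v} > 0$ never left $E^\phi$ before $v$ saturated (so its $y_e$ was being raised in tandem with $z_v$ the whole time) and that $d^\phi(v) > c_v$ persisted over that entire window. Both follow from the monotonicity of $d^\phi(v)$ together with the fact that $e \in \INCEDGESET{S'_v} \cap E^\phi$ at saturation, but I would state the monotonicity observation explicitly as the crux, since everything else is a direct reading of the update rules.
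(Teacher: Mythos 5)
Your second bullet is correct, and your first bullet is correct for vertices that are assigned immediately upon saturation, but your argument for the first bullet has a genuine gap for vertices in $V_S$ --- which is exactly the case this proposition is invoked for (in the proof of Lemma~\ref{lemma_u_in_s_unicost}). You repeatedly anchor the argument at the moment of saturation (``$e$ was still in $E^\phi$ right up until $v$ saturated and was included in the serving set $S'_v$''), but for $v \in V_S$ these are two different moments: $\OP{Self-Containment}$ returns $S'=\emptyset$ when $v$ saturates, $v$ then sits in the pending set $S$ for some number of later iterations, and only then is it assigned its demand. During that interval the edges $e$ with $h_{e,v}>0$ are still in $E^\phi$, so $y_e$ keeps rising, and $z_v$ must keep rising in lockstep for the claimed equality $z_v=y_e$ to hold at termination. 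Your justification that the branch condition $d^\phi(v)>c_v$ persists is monotonicity of $d^\phi(v)$ plus the hypothesis that it holds at saturation --- but monotonicity points the wrong way here: it shows the condition held \emph{before} saturation, and says nothing to prevent $d^\phi(v)$ from dropping to or below $c_v$ \emph{after} saturation, when later calls to $\OP{Self-Containment}$ remove edges incident to $v$ (assigning them to other vertices) while $v$ waits in $S$. If that happened, the algorithm would switch to raising $g_{e,v}$ instead of $z_v$ for $v$'s remaining pending edges, and an edge assigned to $v$ afterwards would end up with $y_e = z_v + g_{e,v} > z_v$, violating the first bullet.

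Closing this gap requires the flow-based machinery, not monotonicity: by the maximality guarantee of $\OP{Self-Containment}$ (Lemma~\ref{lemma_local_fully_serve}, in the form used in the proof of Lemma~\ref{lemma-primal-dual-feasibility}), after every iteration no nonempty subset of $S$ can fully serve its own pending edges; in particular $\{v\}$ cannot, so $d^\phi(v) > m_v\cdot c_v \ge c_v$ holds at every iteration boundary for as long as $v$ remains in $S$. Hence the $z_v$-branch applies during the entire time any edge with $h_{e,v}>0$ is active, and since all of $v$'s remaining incident edges in $E^\phi$ are removed at the single moment $v$ leaves $S$, the variables $z_v$ and $y_e$ rise over exactly the same intervals and at the same rate, giving $z_v = y_e$. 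The paper itself states the proposition as following ``directly from the way the dual variables are raised,'' leaving this invariant implicit; your write-up makes the pre-saturation half explicit but substitutes an insufficient reason (monotonicity) for the post-saturation half, which is where the real content lies.
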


\smallskip

The following lemma gives the properties for vertices in $V_S$.
%

\begin{figure*}[tp]
\centering
\fbox
{\includegraphics[scale=0.9]{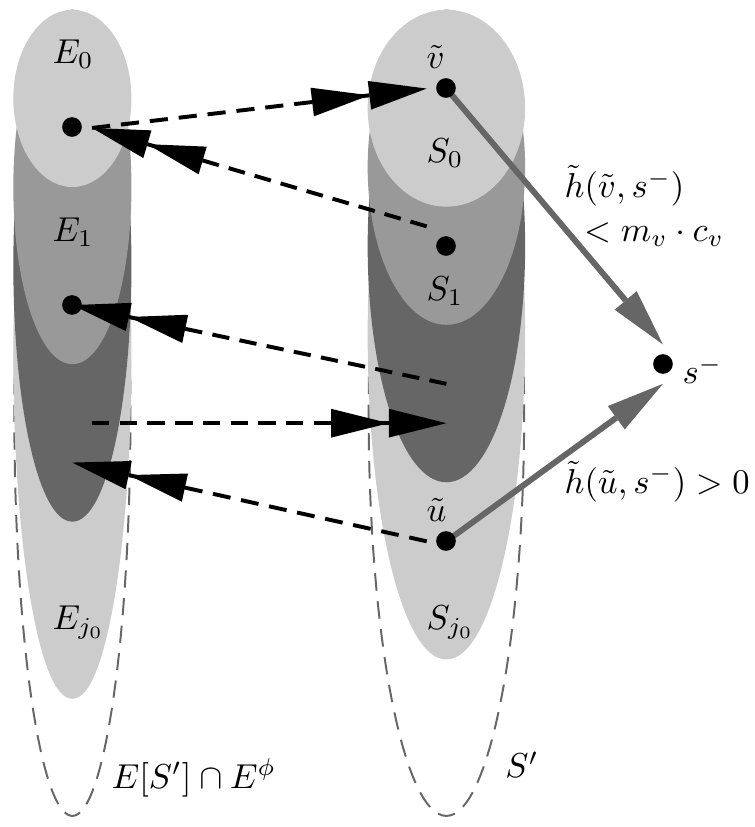}}
\caption{Alternating paths in the flow-graph $\MC{G}(S')$.}
\label{fig-S-augmenting-path}
\end{figure*}

\begin{lemma} \label{lemma_u_in_s_unicost}
For any $v \in V_S$, we have 
\begin{enumerate}
	\item
		$\RECEIVED{h'_v}{v} = m_v\cdot c_v$. 
		
	\item
		$w_v\cdot m_v = \RECEIVED{h'_v}{v}\cdot y_e - m_v\cdot \eta_v$ for all $e \in \INCEDGESET{v}$ such that $h_{e,v} > 0$.
\end{enumerate}
\end{lemma}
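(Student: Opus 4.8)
The plan is to derive the second identity from the first by pure bookkeeping, and to put the real effort into the first, which is where the flow structure of $\OP{Self-Containment}$ is needed.

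I begin with a fact that underlies both parts. At the moment $v$ is added to $S$, the $\OP{Self-Containment}$ call that processed it returned $S'=\emptyset$; by Lemma~\ref{lemma_local_fully_serve} no subset of the candidate set is self-contained, and in particular the singleton $\{v\}$ cannot serve the demand in its vicinity. Since the only capacity constraint on $v$ in the flow graph is the edge $(\tilde v,s^-)$ of capacity $m_v\cdot c_v$, this forces $d^\phi(v) > m_v\cdot c_v$ at that instant (so $c_v>0$ and $m_v\ge 1$). As $d^\phi(v)$ is non-increasing along the run of the algorithm, since edges only leave $E^\phi$, we get $d^\phi(v) > c_v$ for the entire life of $v$ up to saturation. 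Consequently the dual-raising rule never raises $g_{e,v}$, so $g_{e,v}=0$ for every $e$; moreover Proposition~\ref{lemma_u_notin_s_heavy_unicost} applies (its hypothesis $d^\phi(v)>c_v$ at saturation holds) and yields $z_v = y_e$ for every $e$ with $h_{e,v}>0$. Granting Part~1, Part~2 is then immediate: the saturation equality $c_v\cdot z_v + \sum_{e}d_e\cdot g_{e,v} - \eta_v = w_v$ collapses to $w_v = c_v\cdot z_v - \eta_v$ because $g_{e,v}=0$, substituting $z_v=y_e$ gives $w_v = c_v\cdot y_e - \eta_v$, and multiplying by $m_v$ while replacing $m_v\cdot c_v$ by $\RECEIVED{h'_v}{v}$ produces exactly $w_v\cdot m_v = \RECEIVED{h'_v}{v}\cdot y_e - m_v\cdot \eta_v$.

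For Part~1 I would work inside the recursion $\OP{Self-Containment}(S\cup\{u\},u)$ that outputs $(S'_v,h'_v)$ with $v\in S'_v$, where $u$ is its trigger and $\tilde h$ is the computed flow minimizing $\tilde h(\tilde u,s^-)$. Since $v\in S'_v$ its incident edges are all fully served, so $\RECEIVED{h'_v}{v}=\tilde h(\tilde v,s^-)\le m_v\cdot c_v$, and it remains to rule out strict inequality. The argument has two steps. First I would show that the trigger is indispensable, namely that $S'_v\setminus\{u\}$ is \emph{not} self-contained under the current $E^\phi$: if it were, then, $E^\phi$ being unchanged since the last edge-removal, it would already have been self-contained at that earlier resolution, and combining it with the set resolved there gives a strictly larger self-contained set; by the maximality in Lemma~\ref{lemma_local_fully_serve} the vertex $v$ would then have been resolved no later than that point, a contradiction (the same comparison also forces $u\in S'_v$, so $S'_v\setminus\{u\}\ni v$ is a proper subset). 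Second, with $S'_v$ self-contained but $S'_v\setminus\{u\}$ not, I would run an alternating-path argument in $\MC{G}(S'_v)$ as in Figure~\ref{fig-S-augmenting-path}: if some $w\in S'_v\setminus\{u\}$ were left underfull, the spare capacity at $\tilde w$, together with the fact that $S'_v\setminus\{u\}$ cannot absorb its own demand, would yield a residual path along which flow can be diverted from $u$ to $w$, strictly decreasing $\tilde h(\tilde u,s^-)$ and contradicting its minimality. Hence every vertex of $S'_v\setminus\{u\}$, in particular $v$, is filled to capacity, giving $\RECEIVED{h'_v}{v}=m_v\cdot c_v$.

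The main obstacle is this second step. The delicate point is that the unassigned demand around $v$ may have shrunk between the time $v$ became pending and the time it is resolved, so the filling of $v$ cannot be argued from $d^\phi(v)>m_v\cdot c_v$ alone; it is precisely the minimization of the flow into $u$, fed by the non-self-containment of $S'_v\setminus\{u\}$ established in the first step, that closes the gap. Making the rerouting rigorous, while carefully tracking which edges have already left $E^\phi$ and verifying that the residual path reaches the trigger, is where the bulk of the care will be needed.
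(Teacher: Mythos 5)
Your Part~2 derivation and your preliminary observations (the trigger $u$ must lie in $S'_v$; $d^\phi(v) > m_v\cdot c_v$ at the moment $v$ enters $S$, hence $g_{e,v}=0$ throughout and Proposition~\ref{lemma_u_notin_s_heavy_unicost} applies) match the paper and are sound. The problem is the second step of your Part~1 --- exactly the step you flag as the crux. The implication you rely on there is false: from ``$S'_v$ is self-contained, $S'_v\setminus\{u\}$ is not self-contained, and $w\in S'_v\setminus\{u\}$ is underfull'' one cannot conclude that a residual path from $\tilde u$ to $\tilde w$ exists. Concretely, take $S'_v=\{w,x,u\}$ with edges $e_1=\{w\}$ of demand $1$ and $e_2=\{x,u\}$ of demand $10$, and $m\cdot c=5$ for every vertex; the flow sending $1$ unit of $e_1$ to $w$ and splitting $e_2$ as $5$ to $x$ and $5$ to $u$ serves everything, $\{w,x\}$ has capacity $10<11$ so it is not self-contained, $w$ is underfull, every max-flow must push $5$ units into $u$ (so minimality of $\tilde h(\tilde u,s^-)$ holds vacuously), and yet the residual graph from $\tilde u$ reaches only $\tilde e_2$ and $\tilde x$ --- never $\tilde w$. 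So your two stated premises do not ``close the gap''; no rerouting contradiction is available from them.

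What rules out such configurations in the algorithm --- and what the paper actually uses --- is the stronger invariant that \emph{no} subset $B\subseteq S$ is self-contained at the start of the iteration (in the counterexample, $\{w\}$ itself is self-contained, which the invariant forbids). This follows from Lemma~\ref{lemma_local_fully_serve} by induction over iterations: any self-contained subset of $S\cup\{u\}$ is absorbed into the returned $S'$ and removed, and a set that becomes self-contained only after $\INCEDGESET{S'}$ is deleted could be united with $S'$ into a set that was self-contained before the deletion. The paper then applies this invariant not to the single set $S'_v\setminus\{u\}$ but to \emph{every} layer $S_i$ of a breadth-first expansion from $v$ inside $S'_v$: each layer either fails to grow --- in which case it is a self-contained subset of $S$, a contradiction --- or grows until it contains $u$, and the resulting chain $S_0\subsetneq S_1\subsetneq\cdots\subsetneq S_{j_0}$ furnishes the alternating path (traced through flow-carrying arcs) along which flow is rerouted from $\tilde u$ into the spare capacity at $\tilde v$. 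Equivalently: the set of vertices that can reach $\tilde v$ in the residual graph is itself self-contained whenever it avoids $u$, so it must contain $u$. It is this ``for all subsets of $S$'' strength, quantified over the intermediate sets, that your plan is missing; with only the non-self-containment of the full set $S'_v\setminus\{u\}$, the step you correctly identify as the main obstacle cannot be completed.
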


\begin{proof}
%
First we prove that $\RECEIVED{h'_v}{v} < m_v\cdot c_v$.
Without loss of generality, we assume that $m_v \ge 1$ and $\RECEIVED{h'_v}{v} < m_v\cdot c_v$ for a contradiction.

\smallskip

Consider the iteration for which the vertex $v$ was removed from $S$
and let $u$ be the vertex that becomes saturated in that iteration.
%
%
By Lemma~\ref{lemma_local_fully_serve}, we know that in the beginning of that iteration, $\nexists B \subseteq S$ such that $B$ can fully-serve $\INCEDGESET{B} \cap E^\phi$.
Therefore it follows that $u \in S'_v$, for otherwise $S'_v$ would have been removed from $S$ in the previous iteration.

\smallskip

Consider the flow-graph $\MC{G}(S'_v)$ and the max-flow $\tilde{h}'_v$ to which $h'_v$ corresponds.
We know that $\tilde{h}'_v(\tilde{e},\tilde{u}) = 0$ for all $e \in \INCEDGESET{v} \cap E^\phi$, for otherwise we have an alternating path $\tilde{u} \rightarrow \tilde{e} \rightarrow \tilde{v}$ so that we can reroute the flow $\tilde{e} \rightarrow \tilde{u} \rightarrow s^-$ to $e \rightarrow \tilde{v} \rightarrow s^-$, which is a contradiction to the fact that the max-flow we compute is the one that minimizes the flow from $\tilde{u}$ to $s^-$.

\smallskip

Let $S_0 := \{v\}$ and $E_0 := \INCEDGESET{v} \cap E^\phi$.
For $i \ge 1$, consider the sets $S_i$ and $E_i$ defined as
$$S_i := \bigcup_{e \in E_{i-1}}\BIGBP{v'\colon v' \in e \cap S'_v} \enskip \text{and} \enskip E_i := \INCEDGESET{S_i} \cap E^\phi.$$
%
%
%
%
Note that, $u \notin S_i$ implies that $S_i \subsetneq S_{i+1}$, for otherwise $S_i$ would be a subset of $S$ that can fully-serve $\INCEDGESET{S_i} \cap E^\phi$ since the beginning of the iteration, a contradiction to Lemma~\ref{lemma_local_fully_serve}.
Therefore $u \in S_j$ for some $j \ge 1$ since $|S_i| \le |S'_v| < \infty$. Let $j_0$ be the smallest integer such that $u \in S_{j_0}$.
By definition we have $S_0 \subsetneq S_1 \subsetneq \ldots \subsetneq S_{j_0} \subseteq S'_v$.
This corresponds to an alternating path to which we can reroute the flow from $u$ to $v$, a contradiction.
See also Fig.~\ref{fig-S-augmenting-path} for an illustration.
Therefore we have $\RECEIVED{h'_v}{v} = m_v\cdot c_v$.

\smallskip

For the second half of this lemma, since $v \in V_S$, we know that $d^\phi(v) > c_v$ before it gets saturated.
Therefore, by Proposition~\ref{lemma_u_notin_s_heavy_unicost},
we know that $y_e = z_v$ holds for all $e \in \INCEDGESET{v}$ such that $h_{e,v} > 0$.
It follows that $w_v = c_v\cdot z_v - \eta_v = c_v\cdot y_e - \eta_v$
%
and $w_v\cdot m_v = \RECEIVED{h'_v}{v}\cdot y_e - m_v \cdot \eta_v$ as claimed.
%
\end{proof}

%

%



%
The following auxiliary lemma, which is carried over from the previous primal-dual framework, shows that, for any vertex $v$ with $d^\phi(v) \le c_v$ when saturated, we can locate at most $c_v$ units of demands from $\INCEDGESET{v}$ such that their dual value pays for $w_v$.
%
This statement holds intuitively since $v$ is saturated.
%
		
\begin{lemma} \label{claim-light-demands}
For any $v \in V$ with $d^\phi(v) \le c_v$ when saturated, 
we can compute a function $\ell_v\colon \INCEDGESET{v} \rightarrow {\mathbb R}^{\ge 0}$ such that the following holds:
\begin{enumerate}[\quad (a)]
	\item
		$0\le h_{e,v} \le \ell_v(e) \le d_e$, for all $e \in \INCEDGESET{v}$.
	\item
		$\sum_{e \in \INCEDGESET{v}}\ell_v(e) \le c_v$.
	\item
		$\sum_{e \in \INCEDGESET{v}}\ell_v(e)\cdot y_e = w_v$.
\end{enumerate}
\end{lemma}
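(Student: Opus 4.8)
The plan is to exploit the fact that $v$ has just become saturated under the condition $d^\phi(v) \le c_v$, which (by the algorithm) means that $g_{e,v}$ was the variable raised for $v$ while its incident edges in $E^\phi$ were still live. First I would record the saturation identity: since $v$ saturates, the dual inequality for $v$ holds with equality, and because $z_v$ was never raised (as $d^\phi(v)\le c_v$ throughout), we have $\sum_{e\in\INCEDGESET{v}}d_e\cdot g_{e,v} - \eta_v = w_v$, with $z_v = 0$. I would also invoke the coupling $y_e \le z_v + g_{e,v} = g_{e,v}$ maintained by the algorithm: for every $e$ on which $g_{e,v}$ was actually raised we in fact have $g_{e,v}=y_e$ at the moment $e$ left $E^\phi$, and for edges never contributing to $v$'s saturation we have $g_{e,v}=0$. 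The reason I believe $\eta_v$ may be ignored here is Proposition~\ref{lemma_u_notin_s_heavy_unicost}: $\eta_v>0$ only when $d^\phi(v)>c_v$, so in the present light case $\eta_v=0$, and the saturation identity simplifies to $\sum_{e}d_e\cdot y_e = w_v$, where the sum ranges over the (at most $c_v$ worth of) edges that contributed.

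Next I would \emph{define} the function $\ell_v$. The natural first guess is $\ell_v(e):=d_e$ on the edges that contributed to saturating $v$ and $0$ elsewhere; this immediately gives property~(c) from the saturation identity. For property~(b) I would argue that the total demand charged is at most $c_v$: this is exactly the content of the light condition $d^\phi(v)\le c_v$, since $\sum_{e\in\INCEDGESET{v}\cap E^\phi}d_e = d^\phi(v)\le c_v$ and only such edges carry positive $g_{e,v}$. Property~(a), namely $h_{e,v}\le \ell_v(e)\le d_e$, would follow because the demand assignment $h$ produced by $\OP{Self-Containment}$ never assigns more than $d_e$ from any edge, and on the contributing edges $\ell_v(e)=d_e$ dominates $h_{e,v}$, while on non-contributing edges I must check $h_{e,v}=0$ as well.

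The main obstacle I anticipate is precisely this last consistency check: reconciling the \emph{timing} of when $g_{e,v}$ stops being raised (when $e$ leaves $E^\phi$) with the demand actually assigned to $v$ in the final $h$. A vertex $v$ may be light at the instant it saturates, yet some of its incident edges may have already been served by \emph{other} vertices before $v$ saturated, so that $d^\phi(v)$ (restricted to live edges) is strictly smaller than the full incident demand, and the contributing edges form a shifting set over time. To handle this cleanly I would track, for each edge $e\in\INCEDGESET{v}$, the cumulative amount of $y_e$ absorbed into $g_{e,v}$, and set $\ell_v(e)$ equal to that absorbed quantity divided by $y_e$ (interpreting $0/0$ as $0$) rather than bluntly setting it to $d_e$. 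This refined definition makes (c) an exact rewriting of the saturation identity $\sum_e \ell_v(e)\,y_e = \sum_e d_e\,g_{e,v} = w_v$, makes (b) follow because the capacity absorbed by $g_{e,v}$ never exceeds $c_v$ (the total rate of $g$-raising for a light vertex is bounded by $c_v$ per unit of $y$), and makes (a) follow because $h_{e,v}$ can only draw on demand that was routed to $v$, which is bounded by the demand $\ell_v(e)$ that $v$ committed to through its $g$-variable. Since this lemma is stated as carried over from the previous framework, I expect the argument to be essentially a bookkeeping of these per-edge absorbed quantities, with the only genuine care needed in the light/heavy timing argument just described.
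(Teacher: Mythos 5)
There is a genuine gap. Your argument correctly disposes of the case in which $d^\phi(v) \le c_v$ holds from the very beginning: there $z_v$ is never raised, $g_{e,v}=y_e$ for every incident edge with positive demand, $\eta_v = 0$, and $\ell_v(e):=d_e$ works --- this is exactly the paper's first (easy) case. But the lemma only assumes $d^\phi(v)\le c_v$ \emph{at the moment of saturation}, and the hard case is a vertex that starts heavy ($d^\phi(v)>c_v$) and becomes light only later, because incident edges were served by other vertices. During the heavy phase the algorithm raises $z_v$, not $g_{e,v}$, so at saturation $z_v>0$ and the saturation identity reads $w_v = c_v\cdot z_v + \sum_{e\in\INCEDGESET{v}}d_e\cdot g_{e,v}$ (with $\eta_v=0$), not $w_v=\sum_{e}d_e\cdot g_{e,v}$. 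Your refined bookkeeping --- $\ell_v(e)$ proportional to the amount absorbed into $g_{e,v}$ --- therefore recovers only $\sum_{e}\ell_v(e)\cdot y_e = w_v - c_v\cdot z_v < w_v$, so property (c), which is an \emph{equality}, fails. It also breaks property (a): an edge $e$ that stays live until saturation satisfies $y_e = z_v + g_{e,v} > g_{e,v}$, hence your $\ell_v(e) < d_e$, while $\OP{Self-Containment}$ may well assign $h_{e,v}=d_e$. The same $z_v$ term invalidates your claim that $g_{e,v}=y_e$ on every edge whose $g$-variable was raised. (A minor point: your appeal to Proposition~\ref{lemma_u_notin_s_heavy_unicost} for $\eta_v=0$ is misdirected, since that proposition concerns vertices with $d^\phi(v)>c_v$ at saturation; the correct reason is that $\eta_v$ is raised only for vertices in $S$, which a light-at-saturation vertex never enters.)

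The missing idea, and the way the paper closes this case, is to convert the term $c_v\cdot z_v$ into per-edge terms by exploiting the edges removed exactly at the heavy-to-light transition. Let $H_v$ be the edges of $\INCEDGESET{v}$ removed from $E^\phi$ in the iteration where $d^\phi(v)$ first drops to at most $c_v$, and $K_v$ those still live afterwards. One has $y_e=z_v$ and $g_{e,v}=0$ for $e\in H_v$ (their $y$-values and $z_v$ froze at the same instant), and $y_e=z_v+g_{e,v}$ for $e\in K_v$. Setting $\ell_v(e)=d_e$ on $K_v$ and greedily filling the residual capacity $c_v-\sum_{e\in K_v}d_e$ with edges of $H_v$ makes $\sum_{e}\ell_v(e)$ equal to $c_v$ exactly --- possible because the live demand just before that iteration exceeded $c_v$ --- whence $\sum_{e}\ell_v(e)\cdot y_e = c_v\cdot z_v + \sum_{e\in K_v}d_e\cdot g_{e,v} = w_v$. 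Property (a) then holds because $h_{e,v}=0$ for every $e\in H_v$: those edges were assigned to other vertices before $v$ saturated. Your proposal never produces this capacity-filling step, and without it the statement cannot be established for vertices that were heavy at some earlier point of the execution.
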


Intuitively, Proposition~\ref{lemma_u_notin_s_heavy_unicost} and Lemma~\ref{lemma_u_in_s_unicost} provide a solid upper-bound for vertices whose capacity is fairly used. 
However, we remark that, this approach does not yield a solid guarantee for vertices whose capacity is barely used, i.e., $\RECEIVED{h'_v}{v} \ll c_v$.
The reason is that the demand that is served (charged) by vertices that have been included in $S$, i.e., those discussed in Lemma~\ref{lemma_u_in_s_unicost}, cannot be charged again since their dual values are inflated during the primal-dual process.
%
%


%

%



\section{Augmented $\BIGP{k, (1+\frac{1}{k-1})(f-1)}$-Cover} \label{sec-augmented-covering}

In this section we establish the following theorem: 

\begin{theorem} \label{thm-augmented-cover-restate}
For any integer $k\ge 2$, we can compute an augmented $\BIGP{k, (1+\frac{1}{k-1})(f-1)}$-cover for VC-HC in polynomial time.
\end{theorem}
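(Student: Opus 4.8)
# Proof Proposal for Theorem 4.2

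The plan is to combine the feasible dual solution $\Psi = (y_e, z_v, g_{e,v}, \eta_v)$ and the primal demand assignment $h$ produced by \textsc{Dual-VCHC} into an augmented cover, and then bound its cost against $\OP{OPT}(\Pi)$ using weak LP duality. The central insight, already flagged in the introduction, is that the feasible region of the dual LP does \emph{not} change when we replace $m_v$ by $k \cdot m_v$; only the objective $\sum_e d_e y_e - \sum_v m_v \eta_v$ changes, and it changes in a controlled way through the $\eta_v$ terms. So $\Psi$ remains dual-feasible for the augmented instance, and its objective value lower-bounds $\OP{OPT}(\Pi)$ for the \emph{original} (un-augmented) instance. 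The augmentation is what buys room to pay for the $\eta_v$-inflated vertices.

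First I would fix the output assignment $h$ and argue that scaling multiplicities by $k$ makes it feasible in the augmented sense, i.e.\ establish conditions (1) and (2) of an augmented $(k,\gamma)$-cover; condition (1) is immediate from Lemma~\ref{lemma-primal-dual-feasibility} (feasibility of $h$), and condition (2) holds trivially since $x^{(h)}_v \le m_v \le k\cdot m_v$. The real work is condition (3): bounding $w(h) = \sum_v w_v \cdot x^{(h)}_v$. I would split $V$ into the vertices handled by Lemma~\ref{lemma_u_in_s_unicost} (those in $V_S$, whose capacity is fully used, $\RECEIVED{h'_v}{v} = m_v c_v$) and the ``light'' vertices covered by Lemma~\ref{claim-light-demands} (those with $d^\phi(v)\le c_v$ when saturated, whose cost $w_v$ is exactly paid by at most $c_v$ units of demand via the function $\ell_v$). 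For each group I would charge the vertex cost to the dual value $\sum_e d_e y_e$ accumulated on the demands it serves, using $w_v m_v = \RECEIVED{h'_v}{v}\, y_e - m_v \eta_v$ from Lemma~\ref{lemma_u_in_s_unicost}(2) and $\sum_e \ell_v(e) y_e = w_v$ from Lemma~\ref{claim-light-demands}(c).

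The key quantitative step is to control the double-counting of demand. Because each edge $e$ belongs to at most $f$ vertices and contributes $d_e y_e$ to the dual objective, the total charge is at most $(f-1)\sum_e d_e y_e$ after accounting for the one unit of demand per edge that is not re-chargeable — this is where the $(f-1)$ factor originates rather than $f$. Meanwhile the $\eta_v$ terms, which appear as $-m_v \eta_v$ in the dual objective, must be reconciled with the augmented multiplicity: raising the available multiplicity to $k\cdot m_v$ means the inflated cost contribution $m_v \eta_v$ can be amortized across the $k$ copies, and carefully the factor $\bigl(1+\tfrac{1}{k-1}\bigr)$ comes out of bounding $\frac{k}{k-1}$ in this amortization. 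Concretely, I expect an inequality of the form $w(h) \le \bigl(1+\tfrac{1}{k-1}\bigr)(f-1)\bigl(\sum_e d_e y_e - \sum_v m_v \eta_v\bigr) \le \bigl(1+\tfrac{1}{k-1}\bigr)(f-1)\,\OP{OPT}(\Pi)$, where the last step is weak duality for the original instance.

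The main obstacle I anticipate is precisely the bookkeeping around the $\eta_v$ variables for vertices in $V_S$. Their dual values are inflated during the primal-dual process (as the closing remark of \S\ref{sec-properties-primal-dual} warns), so the ``light'' charging of Lemma~\ref{claim-light-demands} cannot be reapplied to demand already consumed by $S$-vertices. The delicate part will be to verify that after augmenting to $k$ copies, the negative $-m_v\eta_v$ term in the dual objective is small enough relative to the served demand $\RECEIVED{h'_v}{v} = m_v c_v$ that the ratio stays within $\bigl(1+\tfrac{1}{k-1}\bigr)$; intuitively, with $k\ge 2$ copies available the algorithm never needs to inflate $\eta_v$ so aggressively that a single copy's cost exceeds the demand it serves by more than this factor, but making that trade-off precise — and confirming it degrades gracefully to $f-1$ as $k\to\infty$ — is where the argument must be watertight.
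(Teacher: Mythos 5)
Your proposal has a fatal structural flaw: you run \textsc{Dual-VCHC} on the \emph{original} instance $\Pi$ and treat the augmentation as free slack, asserting that condition~(2) ``holds trivially since $x^{(h)}_v \le m_v \le k\cdot m_v$.'' The paper does the opposite: it runs \textsc{Dual-VCHC} on the \emph{augmented} instance $\Pi' = (V,E,d,w,c,m')$ with $m'_v = k\, m_v$, so the output satisfies $x^{(h)}_v \le k\, m_v$ but in general \emph{not} $x^{(h)}_v \le m_v$. This is not a presentational choice. If your version could be carried out, the assignment $h$ would itself be feasible for the original instance with cost at most $\bigl(1+\frac{1}{k-1}\bigr)(f-1)\cdot\OP{OPT}(\Pi)$, i.e., a genuine $O(f)$-approximation for \emph{weighted} VC-HC with no augmentation at all --- precisely what the introduction says is unlikely to exist, since Chuzhoy and Naor showed the weighted problem is at least as hard as set cover. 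So some step of your plan must fail, and it is the cost bound.

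Concretely, the factor $\frac{k}{k-1}$ does not arise from ``amortizing $m_v\eta_v$ across the $k$ copies'' of a solution that only ever uses $m_v$ copies; it arises because, when the algorithm is run on $\Pi'$, every $v \in V_S$ receives exactly $m'_v c_v = k\, m_v c_v$ units of demand (Lemma~\ref{lemma_u_in_s_unicost} applied to $\Pi'$), so that $m_v\eta_v \le m_v c_v y_e = \frac{1}{k}\sum_{e\in\INCEDGESET{v}} h^*_{e,v}\, y_e$. That $1/k$ is exactly what powers Lemma~\ref{lemma-VC-RHC-opt-lower-bound}, namely $\sum_e d_e y_e \le \frac{k}{k-1}\OP{OPT}(\Pi)$, which is indispensable because the charging scheme's positive terms are expressed in $\sum_e d_e y_e$, not in the dual objective. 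In your setup a $V_S$ vertex receives only $m_v c_v$ demand, so the best available bound is $m_v\eta_v \le \sum_{e} h_{e,v} y_e$; then $\sum_e d_e y_e$ can exceed any constant multiple of the dual objective $\sum_e d_e y_e - \sum_v m_v\eta_v$, and weak duality for $\Pi$ (your final step) yields no bound whatsoever. Two further points: the direction of Lemma~\ref{lemma-dual-identical-feasible-region} matters --- $\Psi$ is computed for $\Pi'$ and then re-read as dual-feasible for $\Pi$, not the reverse as your second paragraph suggests --- and even in the correct setup the raw output $h$ only gives ratio $\bigl(1+\frac{1}{k-1}\bigr)f$; the claimed $(f-1)$ requires the explicit rebalancing $h \to h^*$ toward light vertices via the functions $\ell_v$ of Lemma~\ref{claim-light-demands}, a step your sketch does not perform and that cannot be replaced by the informal ``one unit of demand per edge is not re-chargeable'' argument.
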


Let $\Pi = (V,E,d,w,c,m)$ be the input instance.
Let $m'_v := k\cdot m_v$ denote the augmented multiplicity function for each $v \in V$.
We invoke algorithm {\sc Dual-VCHC} on the instance $\Pi' = (V,E,d,w,c,m')$.
Let $h$ be the demand assignment and $\Psi = (y,z,g,\eta)$ be the dual solution output by the algorithm for $\Pi'$.

\smallskip

The following observation is crucial in establishing the bi-approximation ratio: The dual solution $\Psi$, which was computed for instance $\Pi'$, is also feasible for input instance $\Pi$.

\begin{lemma} \label{lemma-dual-identical-feasible-region}
$\Psi$ is feasible for LP(\ref{ILP_dual_cdh}) with respect to $\Pi$.
In other words, we have $$\sum_{e \in E}d_e \cdot y_e - \sum_{v \in V}m_v \cdot \eta_v \hspace{3pt} \le \hspace{3pt} OPT(\Pi).$$
\end{lemma}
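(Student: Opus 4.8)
The plan is to exploit the structural observation, flagged in the introduction, that the available multiplicity $m_v$ enters the dual program LP(\ref{ILP_dual_cdh}) only through its objective and never through any constraint. First I would revisit the three families of dual constraints, namely the saturation inequalities $c_v \cdot z_v + \sum_{e \in \INCEDGESET{v}} d_e \cdot g_{e,v} - \eta_v \le w_v$, the packing inequalities $y_e \le z_v + g_{e,v}$, and the nonnegativity constraints, and observe that each is phrased solely in terms of $c_v$, $d_e$, $w_v$ and the dual variables themselves. Since none of them mentions $m_v$, the feasible region carved out by LP(\ref{ILP_dual_cdh}) is \emph{identical} for the original instance $\Pi$ and for the augmented instance $\Pi' = (V,E,d,w,c,m')$; only the coefficient of $\eta_v$ in the objective differs between the two.

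With this in hand the argument splits into two short steps. We already know, from the feasibility discussion at the start of \S\ref{sec-properties-primal-dual}, that {\sc Dual-VCHC} never violates a dual constraint as it raises the variables, so its output $\Psi = (y,z,g,\eta)$ is feasible for LP(\ref{ILP_dual_cdh}) with respect to the instance it was run on, namely $\Pi'$. Because the two feasible regions coincide, $\Psi$ is therefore feasible for LP(\ref{ILP_dual_cdh}) with respect to $\Pi$ as well, which is exactly the first assertion of the lemma.

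For the displayed inequality I would invoke weak LP duality between LP(\ref{ILP_cdh}) and LP(\ref{ILP_dual_cdh}) for the instance $\Pi$. The primal is a minimization and its dual a maximization, so the objective value attained by any feasible dual solution lower-bounds the primal optimum. Evaluating the dual objective of $\Pi$ at $\Psi$, using the \emph{original} coefficients $m_v$, yields precisely $\sum_{e \in E} d_e \cdot y_e - \sum_{v \in V} m_v \cdot \eta_v$, and weak duality caps this quantity by the optimal value of LP(\ref{ILP_cdh}) for $\Pi$, which is $\OP{OPT}(\Pi)$. This gives the claimed bound.

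I expect no genuine obstacle in the derivation; its entire content is the conceptual recognition that scaling $m_v$ by the factor $k$ leaves the dual feasible region untouched and rescales only the objective. The one point to treat with care is that the objective we must evaluate is the one attached to $\Pi$ (coefficients $m_v$), not the one the algorithm implicitly worked with on $\Pi'$ (coefficients $k\cdot m_v$); since each $\eta_v \ge 0$, moving from $\Pi'$ to $\Pi$ can only \emph{increase} the objective value, yet weak duality for $\Pi$ still caps it at $\OP{OPT}(\Pi)$, which is the bound needed in the bicriteria analysis that follows.
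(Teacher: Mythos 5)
Your proposal is correct and follows exactly the same route as the paper's own (one-line) proof: the dual constraints of LP(\ref{ILP_dual_cdh}) never involve $m_v$, so the feasible region is identical for $\Pi$ and $\Pi'$, and weak duality applied to $\Pi$ gives the displayed bound. Your added remark that replacing $k\cdot m_v$ by $m_v$ in the objective can only increase its value (since $\eta_v \ge 0$) is accurate but not needed, as feasibility plus weak duality for $\Pi$ already suffices.
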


\begin{proof}
The statements follow directly since LP(\ref{ILP_dual_cdh}) has the same feasible region for $\Pi$ and $\Pi'$.
\end{proof}

It is also worth mentioning that, the assignment $h$ computed by {\sc Dual-VCHC} already gives an augmented $\BIGP{k, (1+\frac{1}{k-1})f}$-cover.
To obtain our claimed ratio, however, we further modify some of the demand assignments in $h$ to achieve better utilization on the residue capacity of the vertices.
Below we describe this procedure and establish the bi-approximation ratio.
%

\smallskip

Let $V_S$ denote the set of vertices that have been included in $S$.
For each $v \in V$ such that $\RECEIVED{h}{v} < c_v$, let $\ell_v$ denote the function given by Lemma~\ref{claim-light-demands} with respect to $v$.
%
%
%
%
%
We use $h^*$ to denote the resulting assignment to obtain, where $h^*$ is initialized to be $h$.
For each $e \in E$, we repeat the following operation until no such vertex pair can be found:
\begin{itemize}
	\item
		Find a vertex pair $u \in e \setminus V_S$ and $v \in e$ such that 
		$$\begin{cases}h^*_{e,u} > 0, \\[2pt] \RECEIVED{h^*}{u} > c_u,\end{cases} \text{and} \quad \begin{cases}\RECEIVED{h}{v} < c_v, \\[2pt] h^*_{e,v} < \ell_v(e). \end{cases}$$
		
		Then
		reassign $\min\big\{ \hspace{3pt} h^*_{e,u}, \enskip \ell_v(e) - h^*_{e,v} \hspace{3pt} \big\} \enskip \text{units of demand of $e$ from $u$ to $v$.}$
		In particular, we set \enskip $\begin{cases} h^*_{e,u} = h^*_{e,u} - R_{u,v}, \\ h^*_{e,v} = h^*_{e,v} + R_{u,v}, \end{cases}$ where $R_{u,v} := \min\big\{ h^*_{e,u}, \enskip \ell_v(e) - h^*_{e,v} \big\}$.
\end{itemize}

Intuitively, in assignment $h^*$ if some demand is currently assigned to a vertex in $V \setminus V_S$ that requires multiple multiplicities, then we try to reassign it to vertices that have surplus residue capacity (according to the function $\ell_v$) to balance the load.
Note that, in this process we do not use additional multiplicities of the vertices, and the reassignments are performed only between vertices not belonging to $V_S$.

\smallskip

The following lemma shows that, the cost incurred by vertices in $V \setminus V_S$ can be distributed to the dual variables of the edges.

\begin{lemma} \label{lemma-VC-RHC-heavy-light-mixed-bound}
We have $$\sum_{v \in V \setminus V_S}w_v\cdot x^{(h^*)}_v \le \BIGP{f-1}\cdot \sum_{v \in V_S}\sum_{e \in \INCEDGESET{v}}h^*_{e,v}\cdot y_e + f\cdot\sum_{v \in V \setminus V_S}\sum_{e \in \INCEDGESET{v}}h^*_{e,v}\cdot y_e.$$
\end{lemma}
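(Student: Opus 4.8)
The plan is to charge the cost $w_v\cdot x^{(h^*)}_v$ of each $v\in V\setminus V_S$ to the dual values $y_e$ of the demands sitting on its incident edges, and then to regroup the whole charge \emph{edge by edge}. First I would record a per-vertex bound for each $v\in V\setminus V_S$ with $x^{(h^*)}_v\ge 1$. If $\RECEIVED{h^*}{v}\ge c_v$ (so $v$ used a full multiplicity), then $v$ was heavy when saturated, so by Proposition~\ref{lemma_u_notin_s_heavy_unicost} together with $v\notin V_S$ (whence $\eta_v=0$ and, since $v$ was always heavy, $g_{e,v}=0$) we get $w_v=c_v\cdot y_e$ for every $e$ with $h^*_{e,v}>0$; combined with $\CEIL{\RECEIVED{h^*}{v}/c_v}\le 2\,\RECEIVED{h^*}{v}/c_v$ for $\RECEIVED{h^*}{v}\ge c_v$, this yields $w_v\cdot x^{(h^*)}_v\le 2\sum_{e\in\INCEDGESET{v}}h^*_{e,v}\cdot y_e$, i.e. a charge of at most $2\,h^*_{e,v}\cdot y_e$ to each incident edge. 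If instead $\RECEIVED{h^*}{v}<c_v$, then $v$ is a residue vertex, so $x^{(h^*)}_v\le 1$ (because $\RECEIVED{h^*}{v}\le\sum_e\ell_v(e)\le c_v$ by Lemma~\ref{claim-light-demands}(b)) and Lemma~\ref{claim-light-demands}(c) gives $w_v=\sum_{e\in\INCEDGESET{v}}\ell_v(e)\cdot y_e$, i.e. a charge of exactly $\ell_v(e)\cdot y_e$ to each incident edge; the borderline case $\RECEIVED{h}{v}=c_v$ forces $\ell_v(e)=h_{e,v}=h^*_{e,v}$ and is absorbed into the first type with coefficient one. Here I rely on the invariant that a reassignment transfer, being gated by $\RECEIVED{h^*}{u}>c_u$, never strips a source below one full multiplicity while leaving it with positive multiplicity, and that residue vertices are only ever targets, so $h^*_{e,v}\le h_{e,v}$ for the heavy vertices.

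\textbf{Reduction to a per-edge inequality.} Writing the total as $\sum_{e\in E}y_e\cdot C_e$, where $C_e$ collects the per-edge charges above, it then suffices to establish, for every $e\in E$, the pointwise inequality $C_e\le (f-1)\sum_{v\in e\cap V_S}h^*_{e,v}+f\sum_{v\in e\setminus V_S}h^*_{e,v}$. I would prove it by a case split driven by the termination condition of the reassignment loop. Put $s:=|e\cap V_S|$, $A:=\sum_{v\in e\cap V_S}h^*_{e,v}$ and $B:=\sum_{v\in e\setminus V_S}h^*_{e,v}$, so that $A+B=d_e$ and the target right-hand side is $(f-1)A+fB$. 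If every residue vertex $v\in e$ already satisfies $h^*_{e,v}=\ell_v(e)$ (no slack on $e$), then each contributing vertex charges coefficient at most $2$, so $C_e\le 2\sum_{v\in e\setminus V_S}h^*_{e,v}=2B\le fB$ for $f\ge 2$, which is at most $(f-1)A+fB$. Otherwise some residue vertex $v\in e$ has $h^*_{e,v}<\ell_v(e)$; then $v$ is a valid reassignment \emph{target} for $e$, so termination forces that $e$ has no valid \emph{source}, i.e. no $u\in e\setminus V_S$ with $h^*_{e,u}>0$ and $\RECEIVED{h^*}{u}>c_u$. Consequently no heavy-usage vertex contributes to $C_e$, and every contributing vertex charges at most $\ell_v(e)\le d_e$ (or $h^*_{e,v}\le d_e$); since there are at most $|e\setminus V_S|\le f-s$ of them, $C_e\le (f-s)\,d_e=(f-s)(A+B)$. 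The elementary identity $(f-1)A+fB-(f-s)(A+B)=(s-1)A+sB\ge 0$ (valid because $s\ge 1$, and trivially because $A=0$ whenever $s=0$) closes this case. Summing $y_e\cdot C_e$ over all $e$ and swapping the order of summation reproduces $(f-1)\sum_{v\in V_S}\sum_{e\in\INCEDGESET{v}}h^*_{e,v}\cdot y_e+f\sum_{v\in V\setminus V_S}\sum_{e\in\INCEDGESET{v}}h^*_{e,v}\cdot y_e$, as claimed.

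\textbf{Main obstacle.} The crux is the second case: one must guarantee that the slack $\ell_v(e)-h^*_{e,v}$ carried by a residue vertex can be absorbed by the demand of the \emph{other} vertices of $e$ without double counting across several residue vertices in the same edge. The device that makes this affordable is precisely the termination condition of the reassignment, which forbids a heavy-usage neighbour on any edge that still carries residue slack; this is what reduces the per-edge charge to ``at most $f-s$ terms, each bounded by $d_e$'', after which the algebra $(s-1)A+sB\ge 0$ is routine and the $V_S$-demand soaks up the deficit with the lighter coefficient $f-1$. The secondary points that need care are the bookkeeping of the borderline full vertices and verifying the invariant that heavy-usage vertices only lose demand during reassignment, so that $h^*_{e,v}\le h_{e,v}$ and the equality $w_v=c_v\cdot y_e$ survive on every edge that still carries their demand.
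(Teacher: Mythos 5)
Your overall plan---charge each $v\in V\setminus V_S$ to the dual values $y_e$ of its incident edges and then verify a pointwise inequality whose hard case is killed by the termination condition of the reassignment loop---runs parallel to the paper's proof, which performs the same accounting per unit of demand rather than per edge. However, there is a genuine gap: your two-case split, $\RECEIVED{h^*}{v}\ge c_v$ versus $\RECEIVED{h^*}{v}<c_v$, misses exactly the vertices that the paper isolates as its case~(2), namely $v\in V\setminus V_S$ with $\RECEIVED{h}{v}>c_v\ge\RECEIVED{h^*}{v}$, i.e.\ vertices that were heavy under $h$ but were drained below capacity by the reassignment. For such a vertex the function $\ell_v$ simply does not exist: Lemma~\ref{claim-light-demands} requires $d^\phi(v)\le c_v$ at saturation, which fails here because $d^\phi(v)\ge\RECEIVED{h}{v}>c_v$, and accordingly the paper defines $\ell_v$ only for vertices with $\RECEIVED{h}{v}<c_v$. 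So the identity $w_v=\sum_{e\in\INCEDGESET{v}}\ell_v(e)\cdot y_e$ that you invoke for every ``residue'' vertex has no meaning for these vertices, and classifying them as residue vertices is incorrect.

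The invariant with which you try to exclude this case---that a transfer ``never strips a source below one full multiplicity while leaving it with positive multiplicity''---is false. The transferred amount is $R_{u,v}=\min\bigl\{h^*_{e,u},\,\ell_v(e)-h^*_{e,v}\bigr\}$, which is not capped by $\RECEIVED{h^*}{u}-c_u$: take $c_u=10$, $h^*_{e,u}=\RECEIVED{h^*}{u}=11$, and a target with slack $\ell_v(e)-h^*_{e,v}=8$; after the move $\RECEIVED{h^*}{u}=3$, strictly between $0$ and $c_u$. (A related boundary slip: a reassignment target can end with $\RECEIVED{h^*}{v}=c_v$ exactly, and then your first case wrongly applies Proposition~\ref{lemma_u_notin_s_heavy_unicost} to a vertex that was not heavy at saturation; the first case must be stated with strict inequality.) The paper handles the missing case by charging such a $v$ against the \emph{original} assignment: since $\RECEIVED{h}{v}>c_v$, Proposition~\ref{lemma_u_notin_s_heavy_unicost} yields $w_v\cdot x^{(h^*)}_v=w_v=c_v\cdot z_v<\sum_{e\in\INCEDGESET{v}}h_{e,v}\cdot y_e$, and the global accounting still closes because any unit of demand charged this way resides, in $h^*$, either at $v$ itself or at a reassignment target---both in $V\setminus V_S$, where the coefficient $f$ can absorb one extra charge. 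Your per-edge inequality can be repaired in the same spirit, because reassignment never moves demand across edges, so the extra $h$-charges of these vertices stay on the same edge and are paid for by the demand that migrated to the targets of that edge; but as written, your proof does not go through.
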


The following lemma provides a lower bound for $\OP{OPT}(\Pi)$ in terms of the net sum of the dual values over the edges.
%
%

\begin{lemma} \label{lemma-VC-RHC-opt-lower-bound}
We have
$$\sum_{e \in E} d_e\cdot y_e \enskip \le \enskip \frac{k}{k-1} \cdot \OP{OPT}(\Pi).$$
\end{lemma}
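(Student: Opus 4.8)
The plan is to combine the feasibility bound from Lemma~\ref{lemma-dual-identical-feasible-region}, namely $\sum_{e\in E}d_e\cdot y_e - \sum_{v\in V}m_v\cdot\eta_v \le \OP{OPT}(\Pi)$, with a separate bound showing that the correction term $\sum_v m_v\cdot\eta_v$ is at most a $1/k$ fraction of the packed dual value $\sum_e d_e\cdot y_e$. This is precisely where the augmentation factor $k$ enters: the extra multiplicities make the $\eta$-mass small relative to the dual value packed into the edges. Once the estimate $\sum_v m_v\eta_v \le \frac1k\sum_e d_e y_e$ is established, substituting it into Lemma~\ref{lemma-dual-identical-feasible-region} gives $\BIGP{1-\frac1k}\sum_e d_e y_e \le \OP{OPT}(\Pi)$, and rearranging yields the claimed $\frac{k}{k-1}$ factor.

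First I would charge the packed dual value to the demand assignment. Since every edge is fully served, $\sum_{v\in e}h^*_{e,v}=d_e$ for all $e\in E$, so $\sum_e d_e\cdot y_e = \sum_{v\in V}\sum_{e\in\INCEDGESET{v}}h^*_{e,v}\cdot y_e$, which I would split into the contributions of $V_S$ and of $V\setminus V_S$. Because the reassignment that produces $h^*$ moves demand only among vertices outside $V_S$, the terms for $v\in V_S$ are unaffected, i.e. $h^*_{e,v}=h_{e,v}$ there, so I may freely use the earlier per-vertex identities for these vertices.

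Next, for each $v\in V_S$ I would evaluate its contribution using the structural facts already in hand. Proposition~\ref{lemma_u_notin_s_heavy_unicost} gives $y_e=z_v$ on every edge with $h_{e,v}>0$ and that $\eta$ is supported only on $V_S$; Lemma~\ref{lemma_u_in_s_unicost} gives $\RECEIVED{h'_v}{v}=m'_v\cdot c_v$ (recall the algorithm is run on the augmented instance $\Pi'$, so $m$ is replaced by $m'_v=k\,m_v$ throughout), whence $x^{(h)}_v=m'_v$. Using the tight dual inequality $c_v\cdot z_v-\eta_v=w_v$ (with $g_{e,v}=0$ for these heavy vertices), the contribution of $v$ becomes $z_v\cdot m'_v c_v = w_v\cdot x^{(h)}_v + m'_v\cdot\eta_v$. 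Summing over $V_S$ and discarding the nonnegative contribution of $V\setminus V_S$ then gives $\sum_v m'_v\cdot\eta_v \le \sum_e d_e\cdot y_e$; dividing by $k$ yields $\sum_v m_v\cdot\eta_v \le \frac1k\sum_e d_e\cdot y_e$.

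I expect the main obstacle to be the bookkeeping in this third step: correctly transferring the per-vertex identities of Proposition~\ref{lemma_u_notin_s_heavy_unicost} and Lemma~\ref{lemma_u_in_s_unicost} to the augmented instance (so that every occurrence of $m$ becomes $m'=k\,m$), and verifying that the reassignment leaves both $\sum_{v\in e}h^*_{e,v}=d_e$ and the $V_S$-contributions intact, so that the dual mass $\sum_v m_v\eta_v$ is charged to the right quantity rather than double-counted. After that, the final inequality is obtained by the immediate substitution into Lemma~\ref{lemma-dual-identical-feasible-region} described above.
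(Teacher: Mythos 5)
Your proof is correct and follows essentially the same route as the paper: both combine the dual feasibility bound of Lemma~\ref{lemma-dual-identical-feasible-region} with the per-vertex estimate $m_v\cdot\eta_v \le \frac{1}{k}\sum_{e \in \INCEDGESET{v}}h^*_{e,v}\cdot y_e$ for $v \in V_S$, derived from Lemma~\ref{lemma_u_in_s_unicost}'s guarantee that such vertices receive $k\, m_v c_v$ units of demand at dual rate $y_e = z_v \ge \eta_v/c_v$. The differences are only in bookkeeping: you aggregate this into the global bound $\sum_v m_v\eta_v \le \frac{1}{k}\sum_e d_e y_e$ and rearrange, while the paper carries the equivalent inequality in per-edge form before concluding.
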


\begin{proof}
For each $v \in V_S$, by Lemma~\ref{lemma_u_in_s_unicost} we have $\sum_{e \in \INCEDGESET{v}}h^*_{e,v} = m'_v\cdot c_v = k\cdot m_v\cdot c_v.$
Furthermore, by the way how $\eta_v$ is raised, we know that $\eta_v \hspace{2pt} \le \hspace{2pt} c_v\cdot z_v \hspace{2pt} = \hspace{2pt} c_v\cdot y_e$ holds for all $e \in \INCEDGESET{v}$ such that $h^*_{e,v} > 0$.
Therefore, it follows that 
\begin{equation}
m_v\cdot \eta_v \enskip \le \enskip m_v\cdot c_v\cdot y_e \enskip \le \enskip \frac{1}{k}\cdot \sum_{e \in \INCEDGESET{v}}h^*_{e,v} \cdot y_e.
\label{ieq-opt-lower-bound}
\end{equation}
By Inequality~(\ref{ieq-opt-lower-bound}) and Lemma~\ref{lemma-dual-identical-feasible-region}, it follows that
\begin{equation}
\sum_{e \in E}\BIGP{d_e - \frac{1}{k} \cdot \sum_{v\in e \cap V_S}h^*_{e,v}}\cdot y_e \enskip \le \enskip \OP{OPT}(\Pi).
\label{ieq-opt-lower-bound-2}
\end{equation}
Therefore,
\begin{align*}
\sum_{e \in E} d_e \cdot y_e \enskip = \enskip \sum_{v \in V} \sum_{e \in \INCEDGESET{v}}h^*_{e,v}\cdot y_e \enskip 
& \le \enskip \sum_{e \in E} \BIGP{\frac{k}{k-1} \cdot d_e -  \frac{1}{k-1}\cdot \sum_{v \in e \cap V_S}h^*_{e,v}}\cdot y_e \\
& = \enskip \frac{k}{k-1}\cdot\BIGP{\sum_{e \in E}\BIGP{d_e - \frac{1}{k}\cdot\sum_{v\in V_S\cap e}h^*_{e,v}}\cdot y_e} \\
& \le \enskip \frac{k}{k-1} \cdot \OP{OPT}(\Pi),
\end{align*}
where the last inequality follows from Inequality~(\ref{ieq-opt-lower-bound-2}).
\end{proof}

\noindent
In the following we establish the bi-criteria approximation factor and prove Theorem~\ref{thm-augmented-cover-restate}.

\begin{lemma} \label{lemma-VC-RHC-overall-bound}
We have $$w(h^*) \le \BIGP{1+\frac{1}{k-1}}\cdot\BIGP{f-1}\cdot OPT(\Pi)$$ for any integer $k \ge 2$.
\end{lemma}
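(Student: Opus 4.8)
The plan is to combine the two cost‐accounting lemmas already established — Lemma~\ref{lemma-VC-RHC-heavy-light-mixed-bound} for the vertices in $V\setminus V_S$ and Lemma~\ref{lemma_u_in_s_unicost} for the vertices in $V_S$ — into a single bound on $w(h^*)=\sum_{v\in V}w_v\cdot x^{(h^*)}_v$, and then feed the resulting sum of the form $\sum_{e}(\cdots)\cdot y_e$ into Lemma~\ref{lemma-VC-RHC-opt-lower-bound}, which converts $\sum_{e\in E}d_e\cdot y_e$ into $\tfrac{k}{k-1}\OP{OPT}(\Pi)$. I split $w(h^*)$ as the cost over $V_S$ plus the cost over $V\setminus V_S$ and bound each piece separately.

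**First I would** handle the $V_S$ part. For each $v\in V_S$, the multiplicity is fully used: by Lemma~\ref{lemma_u_in_s_unicost}(1), $\RECEIVED{h'_v}{v}=m'_v\cdot c_v$ (with $m'_v=k\,m_v$ in the augmented instance), so $x^{(h^*)}_v=m'_v$, and part~(2) gives $w_v\cdot m'_v=\RECEIVED{h'_v}{v}\cdot y_e-m'_v\cdot\eta_v$ for the relevant $e$. The reassignment step never touches demand assigned to vertices in $V_S$, so these quantities are unchanged in $h^*$. Summing, the $V_S$ cost is at most $\sum_{v\in V_S}\sum_{e\in\INCEDGESET{v}}h^*_{e,v}\cdot y_e$ (the $-m'_v\eta_v$ terms only help). **Then I would** take Lemma~\ref{lemma-VC-RHC-heavy-light-mixed-bound} verbatim for the $V\setminus V_S$ cost. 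Adding the two bounds, the coefficient of the $V_S$-edge dual mass becomes $1+(f-1)=f$ and the coefficient of the $V\setminus V_S$-edge dual mass is already $f$, so everything collapses to $w(h^*)\le f\cdot\sum_{v\in V}\sum_{e\in\INCEDGESET{v}}h^*_{e,v}\cdot y_e=f\cdot\sum_{e\in E}d_e\cdot y_e$ (using that $h^*$ fully assigns each edge's demand).

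**The subtle point** is that a naive addition yields only the factor $f$, whereas the target is $(1+\tfrac{1}{k-1})(f-1)$, which for $k\ge 2$ is strictly smaller than $f$. So the crude merge above loses exactly one factor of $\tfrac{k}{k-1}$ worth of slack and, more importantly, loses the improvement from $f$ to $f-1$. The real work is to account more carefully for each edge $e$: each $e$ contributes its dual mass $d_e\cdot y_e$, but the $f$ vertices incident to $e$ should not all be charged the full $y_e$ — Lemma~\ref{lemma-VC-RHC-heavy-light-mixed-bound} is designed precisely so that the per-edge charge is $f-1$ rather than $f$ on the $V_S$ mass, reflecting that one of the up to $f$ incident vertices can be discounted. **The hard part will be** threading this $f-1$ (not $f$) coefficient consistently through the combination: I expect the correct bound to read $w(h^*)\le (f-1)\sum_{e\in E}d_e\cdot y_e$ after the $V_S$ term is folded in with coefficient $f-1$ rather than $f$, exploiting that the fully-saturated $V_S$ vertices already pay for themselves through part~(2) of Lemma~\ref{lemma_u_in_s_unicost} without needing the extra per-edge factor.

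**Finally I would** invoke Lemma~\ref{lemma-VC-RHC-opt-lower-bound} to replace $\sum_{e\in E}d_e\cdot y_e$ by $\tfrac{k}{k-1}\OP{OPT}(\Pi)$, obtaining
\begin{equation*}
w(h^*)\enskip\le\enskip (f-1)\cdot\sum_{e\in E}d_e\cdot y_e\enskip\le\enskip (f-1)\cdot\frac{k}{k-1}\cdot\OP{OPT}(\Pi)\enskip=\enskip\BIGP{1+\frac{1}{k-1}}\BIGP{f-1}\cdot\OP{OPT}(\Pi),
\end{equation*}
which is exactly the claimed ratio, using $\tfrac{k}{k-1}=1+\tfrac{1}{k-1}$. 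I expect the main obstacle to be the bookkeeping in the previous paragraph — verifying that the $V_S$ vertices can be absorbed with the $(f-1)$ coefficient rather than $f$, which is the whole point of maintaining the reassignment procedure and the separation into $V_S$ and $V\setminus V_S$; everything after that is a one-line substitution.
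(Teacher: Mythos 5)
Your setup matches the paper's proof: split $w(h^*)$ over $V_S$ and $V\setminus V_S$, use Lemma~\ref{lemma_u_in_s_unicost} for the first part and Lemma~\ref{lemma-VC-RHC-heavy-light-mixed-bound} for the second, and finish with Lemmas~\ref{lemma-dual-identical-feasible-region} and~\ref{lemma-VC-RHC-opt-lower-bound}. But the step you yourself flag as ``the hard part'' is a genuine gap, and it cannot be closed in the form you propose: the intermediate inequality $w(h^*)\le (f-1)\sum_{e\in E}d_e\cdot y_e$ is \emph{false} in general. Concretely, take $f=2$, $k=2$, vertices $a,b_1,b_2$, edges $e_i=\{a,b_i\}$ with $d_{e_i}=1$, and $w_a=1$, $c_a=1.2$, $m_a=1$, while $w_{b_i}=10$, $c_{b_i}=1$, $m_{b_i}=1$. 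Running {\sc Dual-VCHC} on the augmented instance: since $d^\phi(a)=2>c_a$, the variable $z_a$ is raised, $a$ saturates first at $z_a=y_{e_1}=y_{e_2}=5/6$, and $\OP{Self-Containment}$ assigns both edges to $a$ (its flow capacity is $m'_a c_a=2.4\ge 2$), after which the algorithm stops with $V_S=\emptyset$ and all $\eta_v=0$. No reassignment occurs (the $b_i$ never saturate, so Lemma~\ref{claim-light-demands} provides no $\ell_{b_i}$ to reassign toward; and reassigning to a $b_i$ would only raise the cost), hence $h^*=h$, $x^{(h^*)}_a=\CEIL{2/1.2}=2$, and $w(h^*)=2$, while $(f-1)\sum_e d_e y_e=5/3<2$. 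So no charging scheme, however careful, can put coefficient $f-1$ on every edge's dual mass; the coefficient $f$ on the $V\setminus V_S$ mass in Lemma~\ref{lemma-VC-RHC-heavy-light-mixed-bound} is genuinely needed.

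The actual source of the improvement from $f$ to $\BIGP{1+\frac{1}{k-1}}(f-1)$ is precisely the term you discarded. For $v\in V_S$ the paper keeps the exact identity $w_v\cdot x^{(h^*)}_v=\sum_{e\in\INCEDGESET{v}}h^*_{e,v}\cdot y_e-k\,m_v\eta_v$ rather than the one-sided bound; writing $X=\sum_{e}d_e y_e$ and $Y=\sum_{v}m_v\eta_v$ and adding Lemma~\ref{lemma-VC-RHC-heavy-light-mixed-bound} gives $w(h^*)\le fX-kY$, which is then split as $k(X-Y)+(f-k)X$. Dual feasibility (Lemma~\ref{lemma-dual-identical-feasible-region}) bounds $X-Y\le\OP{OPT}(\Pi)$, Lemma~\ref{lemma-VC-RHC-opt-lower-bound} bounds $X\le\frac{k}{k-1}\OP{OPT}(\Pi)$, and the factor $f-1$ emerges purely from the algebra $k+(f-k)\cdot\frac{k}{k-1}=\frac{k(f-1)}{k-1}$, not from a per-edge accounting. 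Your remark that the $-m'_v\eta_v$ terms ``only help'' and may be dropped is exactly the wrong move: without $-kY$ there is nothing to split against. Note that in the counterexample above ($Y=0$, $k=f$) the paper's route still works, because there $X=X-Y\le\OP{OPT}(\Pi)$ carries the whole bound, whereas your target inequality already fails.
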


\begin{proof}
By Lemma~\ref{lemma_u_in_s_unicost}, we have $\RECEIVED{h}{v} = m'_v\cdot c_v = k\cdot m_v\cdot c_v$ for any $v \in V_S$.
Therefore, $$w_v\cdot x^{(h^*)}_v = \BIGP{c_v \cdot z_v - \eta_v}\cdot k\cdot m_v =  \sum_{e \in \INCEDGESET{v}}h^*_{e,v}\cdot y_e - k\cdot m_v\cdot \eta_v.$$
%
%
Applying Lemma~\ref{lemma-VC-RHC-heavy-light-mixed-bound}, we obtain
\begin{align*}
w(h^*) \enskip & = \enskip \sum_{v \in V_S}w_v \cdot x^{(h^*)}_v + \sum_{v \in (V \setminus V_S)}w_v \cdot x^{(h^*)}_v \\
& \le \enskip \BIGP{\sum_{v \in V_S}\sum_{e \in \INCEDGESET{v}}h^*_{e,v}\cdot y_e - k \cdot \sum_{v \in V} m_v\cdot \eta_v} \\
& \qquad \qquad + \BIGP{(f-1)\cdot\sum_{v \in V_S}\sum_{e \in \INCEDGESET{v}}h^*_{e,v}\cdot y_e + f\cdot \sum_{v \in (V \setminus V_S)}\sum_{e \in \INCEDGESET{v}}h^*_{e,v}\cdot y_e} \\[6pt]
& = \enskip f \cdot \sum_{v \in V} \sum_{e \in \INCEDGESET{v}}h^*_{e,v}\cdot y_e - k \cdot \sum_{v \in V} m_v\cdot \eta_v \\[6pt]
& = \enskip k \cdot \BIGP{\sum_{v \in V}\sum_{e \in \INCEDGESET{v}}h^*_{e,v}\cdot y_e - \sum_{v \in V} m_v\cdot\eta_v} + (f-k)\cdot \sum_{v \in V} \sum_{e \in \INCEDGESET{v}}h^*_{e,v}\cdot y_e.
\end{align*}

\noindent
The former item is upper-bounded by $k\cdot\OP{OPT}(\Pi)$ by Lemma~\ref{lemma-dual-identical-feasible-region}.
Combing the above with Lemma~\ref{lemma-VC-RHC-opt-lower-bound}, we obtain
\begin{align*}
w(h^*) \enskip \le \enskip \BIGP{k+(f-k)\cdot\frac{k}{k-1}}\cdot\OP{OPT}(\Pi) \enskip = \enskip \BIGP{1+\frac{1}{k-1}}\cdot(f-1) \cdot\OP{OPT}(\Pi)
\end{align*}
as claimed.
\end{proof}

%

%


\section{Conclusion} 
\label{sec-conclusion}

We conclude with some future directions.
In this paper we presented bi-approximations for augmented multiplicity constraints.
It is also interesting to consider VC-HC with relaxed demand constraints, i.e., partial covers.
The reduction framework for partial VC-HC provided by Cheung et al.~\cite{doi:10.1137/1.9781611973402.124} and the tight approximation for VC-HC provided by Kao~\cite{Kao16VCHC} jointly provided an almost tight $f+\epsilon$-approximation when the vertices are unweighted.

\smallskip

When the vertices are weighted, it is known that $O\big(\frac{1}{\epsilon}\big) f$ bi-approximations can be obtained via simple LP rounding.
Comparing to the $O\big(\frac{1}{\epsilon}\big)$ bi-approximation result we can obtain for classical vertex cover, there is still a gap, and this would be an interesting direction to explore.
%


\bibliography{approx_capacitated_domination}

\begin{thebibliography}{10}

\bibitem{BarYehuda1981198}
Reuven Bar{-}Yehuda and Shimon Even.
\newblock A linear-time approximation algorithm for the weighted vertex cover
  problem.
\newblock {\em Journal of Algorithms}, 2(2):198 -- 203, 1981.

\bibitem{doi:10.1137/080728044}
Reuven Bar-Yehuda, Guy Flysher, Juli\'{a}n Mestre, and Dror Rawitz.
\newblock Approximation of partial capacitated vertex cover.
\newblock {\em SIAM Journal on Discrete Mathematics}, 24(4):1441--1469, 2010.

\bibitem{doi:10.1137/1.9781611973402.124}
W.-C. Cheung, M.~Goemans, and S.~Wong.
\newblock Improved algorithms for vertex cover with hard capacities on
  multigraphs and hypergraphs.
\newblock SODA'14, 2014.

\bibitem{1151271}
Julia Chuzhoy and Joseph Naor.
\newblock Covering problems with hard capacities.
\newblock {\em SIAM Journal on Computing}, 36(2):498--515, August 2006.

\bibitem{Gandhi:2006:IAA:1740416.1740428}
Rajiv Gandhi, Eran Halperin, Samir Khuller, Guy Kortsarz, and Aravind
  Srinivasan.
\newblock An improved approximation algorithm for vertex cover with hard
  capacities.
\newblock {\em J. Comput. Syst. Sci.}, 72:16--33, February 2006.

\bibitem{DBLP:journals/siamcomp/GrandoniKPS08}
F.~Grandoni, J.~K{\"{o}}nemann, A.~Panconesi, and M.~Sozio.
\newblock A primal-dual bicriteria distributed algorithm for capacitated vertex
  cover.
\newblock {\em {SIAM} J. Comput.}, 38(3), 2008.

\bibitem{989540}
Sudipto Guha, Refael Hassin, Samir Khuller, and Einat Or.
\newblock Capacitated vertex covering.
\newblock {\em Journal of Algorithms}, 48(1):257--270, August 2003.

\bibitem{doi:10.1137/0211045}
Dorit~S. Hochbaum.
\newblock Approximation algorithms for the set covering and vertex cover
  problems.
\newblock {\em SIAM Journal on Computing}, 11(3):555--556, 1982.

\bibitem{MJKaoDissertation12}
Mong-Jen Kao.
\newblock {\em An Algorithmic Approach to Local and Global Resource
  Allocations}.
\newblock PhD thesis, National Taiwan University, 2012.

\bibitem{Kao16VCHC}
Mong-Jen Kao.
\newblock Iterative partial rounding for vertex cover with hard capacities.
\newblock manuscript, 2016.

\bibitem{MJKHLCDTL13}
Mong-Jen Kao, Han-Lin Chen, and D.T. Lee.
\newblock Capacitated domination: Problem complexity and approximation
  algorithms.
\newblock {\em Algorithmica}, November 2013.

\bibitem{springerlink:10.1007/s00453-009-9336-x}
Mong-Jen Kao, Chung-Shou Liao, and D.~T. Lee.
\newblock Capacitated domination problem.
\newblock {\em Algorithmica}, 60(2):274--300, June 2011.

\bibitem{Khot2008335}
Subhash Khot and Oded Regev.
\newblock Vertex cover might be hard to approximate to within $2-\epsilon$.
\newblock {\em Journal of Computer and System Sciences}, 74(3):335--349, May
  2008.

\bibitem{Saha:2012:SCR:2359332.2359443}
Barna Saha and Samir Khuller.
\newblock Set cover revisited: Hypergraph cover with hard capacities.
\newblock ICALP'12, pages 762--773, 2012.

\bibitem{Wolsey1982}
L.A. Wolsey.
\newblock An analysis of the greedy algorithm for the submodular set covering
  problem.
\newblock {\em Combinatorica}, 2(4):385--393, 1982.

\end{thebibliography}

\newpage

\begin{appendix}

\normalsize


%

\begin{figure*}[htp]
\rule{\linewidth}{0.2mm}
\medskip
{{\sc Procedure} {\sc Primal-Dual}}
\begin{algorithmic}[1]
\STATE $w^\phi(v) \longleftarrow 1$, $d^\phi(v) \longleftarrow \sum_{e \in \INCEDGESET{v}}d_e$, for each $v \in V$.
\STATE $S \longleftarrow \emptyset$, $E^\phi \longleftarrow \{e: e\in E, d_e> 0 \}$, $V^\phi \longleftarrow \{v: v\in V, m_v\cdot c_v \cdot d^\phi(v)>0 \}$.
\WHILE{$E^\phi \neq \emptyset$}
	\STATE $r_v \longleftarrow w^\phi(v) / \min\{c_v, d^\phi(v)\}$, for each $v \in V^\phi$.
	\STATE $u \longleftarrow arg\min\{r_v: v \in V^\phi\}$. \enskip /* the next vertex to saturate */
	\STATE $w^\phi(v) \longleftarrow w^\phi(v) - r_u \cdot \min\{c_v, d^\phi(v)\}$, for each $v \in V^\phi$.
    \STATE $(S',h') \longleftarrow \OP{Self-Containment}(S \cup \{u\} , u)$.
    \IF{$S' = \phi$}
        \STATE $S \longleftarrow S \cup \{u\}$.
    \ELSE
        \STATE $S \longleftarrow S \setminus S'$.
        \STATE Remove $\INCEDGESET{S'}$ from $E^\phi$ and update $d^\phi(v)$ for each $v \in V^\phi$.
        \STATE For each $v \in V^\phi$ such that $d^\phi(v) = 0$, remove $v$ from $V^\phi$.
    \ENDIF    
	\STATE $V^\phi \longleftarrow V^\phi \setminus \{u\}$.
\ENDWHILE
\end{algorithmic}
\rule{\linewidth}{0.2mm}

\noindent\rule{\linewidth}{0.2mm}
{{\sc Procedure} $\OP{Self-Containment}(A,u)$}
\begin{algorithmic}[1]
\IF {$u \in A$}
	\STATE $\tilde{h} \longleftarrow \OP{Maxflow}(\MC{G}(A),u)$.  /* max-flow for $\MC{G}(A)$ such that $\tilde{h}(\tilde{u},s^-)$ is minimized. */
\ELSE
	\STATE $\tilde{h} \longleftarrow \OP{Maxflow}(\MC{G}(A))$.
\ENDIF
\STATE $S' \longleftarrow \{v \colon v \in A \text{ such that } \tilde{h}(s^+,\tilde{e}) = d_e$ for all $e \in \INCEDGESET{v} \cap E^\phi \}$.
\IF {$S' = A$ or $S' = \{\phi\}$}
	\STATE $\tilde{h'}_{e,v} \longleftarrow \tilde{h}(\tilde{e},\tilde{v})$, for all $v \in S'$ and $e \in \INCEDGESET{v}\cap E^\phi$.
    \STATE Return $(S', \tilde{h'})$.    
\ELSE
    \STATE Return $\OP{Self-Containment}(S',u)$.
\ENDIF
\end{algorithmic}
\rule{\linewidth}{0.2mm}

\caption{A pseudo-code for our Primal-Dual process.}
\label{algorithm_greedy_charging}
\end{figure*}

%

%


\section{A Primal-Dual Schema for VC-HC}

\medskip

\begin{ap_lemma}{\ref{lemma-primal-dual-feasibility}}
$E^\phi$ becomes empty in polynomial time.
Furthermore, the assignments computed by $\OP{Self-Containment}$ during the process form a feasible demand assignment.
\end{ap_lemma}

\begin{proof}
By procedure $\OP{Self-Containment}$, we know that all the edges in $\INCEDGESET{S'} \cap E^\phi$ will be removed from $E^\phi$ at the end of each iteration, where $S'$ is the set returned by $\OP{Self-Containment}$.

\smallskip

Hence, if $S \neq \emptyset$, then Lemma~\ref{lemma_local_fully_serve} guarantees that $\INCEDGESET{S} \cap E^\phi$ is not empty and we know that none of the vertices in $\bigcup_{e \in \INCEDGESET{S} \cap E^\phi}e$ was included in the set $S'$ returned by process $\OP{Self-Containment}$.
This further means that none of the vertices in $\bigcup_{e \in \INCEDGESET{S} \cap E^\phi}e \backslash S$ has saturated.
If none of them can saturate in later iterations, i.e., $\BIGP{\bigcup_{e \in \INCEDGESET{S} \cap E^\phi}e \backslash S} \cap V^\phi$ is empty, then we have found a proof that the input graph is infeasible since Lemma~\ref{lemma_local_fully_serve} guarantees that no self-containing subsets exist in $S$ after each iteration.

\smallskip

In other words, $\BIGP{\bigcup_{e \in \INCEDGESET{S} \cap E^\phi}e \backslash S} \cap V^\phi \neq \emptyset$ as long as the input graph is feasible.
Similarly, $E^\phi \neq \emptyset$ implies that $\BIGP{S \cup V^\phi} \cap \bigcup_{e \in E^\phi}e \neq \emptyset$.
%
%
%
%
Since the cardinality of $V^\phi$ strictly decreases in each iteration, we know that both $S$ and $E^\phi$ will eventually become empty if the input graph is feasible. 
%
%

\smallskip

The second half of this lemma follows from the fact that the demand assignment computed by $\OP{Self-Containment}$ is valid.
Since $E^\phi$ becomes empty eventually, the demand assignments computed in each iteration jointly form a feasible demand assignment for the input graph.
%
\end{proof}

\medskip

\medskip

\begin{ap_lemma}{\ref{claim-light-demands}}
For any $v \in V$ with $d^\phi(v) \le c_v$ when saturated, 
we can compute a function $\ell_v\colon \INCEDGESET{v} \rightarrow {\mathbb R}^{\ge 0}$ such that the following holds:
\begin{enumerate}[\quad (a)]
	\item
		$0\le h_{e,v} \le \ell_v(e) \le d_e$, for all $e \in \INCEDGESET{v}$.
	\item
		$\sum_{e \in \INCEDGESET{v}}\ell_v(e) \le c_v$.
	\item
		$\sum_{e \in \INCEDGESET{v}}\ell_v(e)\cdot y_e = w_v$.
\end{enumerate}
\end{ap_lemma}

\begin{proof}
Depending on the initial value of $d^\phi(v)$, we consider the following two cases.

\smallskip

If $d^\phi(v) \le c_v$ holds from the beginning, i.e., $\sum_{e\in \INCEDGESET{v}}d_e \le c_v$, then we set $\ell_v(e) = d_e$ for all $e \in \INCEDGESET{v}$.
As a result, condition~(a) and~(b) hold immediately.
For condition~(c), by our primal-dual scheme, we have $z_v = \eta_v = 0$ and $y_e = g_{e,v}$ for all $e \in \INCEDGESET{v}$ with $d_e > 0$. 
Therefore $w_v = \sum_{e \in \INCEDGESET{v}}d_e\cdot g_{e,v} = \sum_{e\in \INCEDGESET{v}}\ell_v(e)\cdot y_e$, and condition~(c) holds as well.

\smallskip

%
If $d^\phi(v) > c_v$ holds in the beginning, then consider the particular iteration of {\sc Dual-VCHC} for which $d^\phi(v)$ becomes less than or equal to $c_v$.
Let $H_v$ and $K_v$ denote the two sets of edges in $\INCEDGESET{v}$ that was removed from $E^\phi$ and that remained in $E^\phi$ in that iteration, respectively.
Note that $H_v \cap K_v = \emptyset$.

\smallskip

The function $\ell_v$ is defined by the following procedure:
For all $e \in K_v$, we set $\ell_v(e)$ to be $d_e$.
Let $c_v' = c_v - \sum_{e \in K_v}d_e$ be the remaining amount of demand to be collected.
We iterate over edges in $H_v$ and do the following for each $e \in H_v$:
\begin{itemize}
	\item
		If $c_v' \ge d_e$, then we set $\ell_v(e)$ to be $d_e$ and subtract $d_e$ from $c_v'$.

	\item
		Otherwise, we set $\ell_v(e)$ to be $c_v'$ and set $c_v'$ to be zero.
\end{itemize}


%
It follows that condition~(a) and (b) hold for the function $\ell_v$ defined above.
Below we show that condition~(c) holds as well.
By our primal-dual scheme, we know that $y_e = z_v + g_{e,v}$ holds for all $e \in H_v \cup K_v$.
(Note that the equality may not hold, however, for $e \in \INCEDGESET{v} \setminus (H_v \cup K_v)$.)
Furthermore, we know that $g_{e,v} = 0$ for all $e \in \INCEDGESET{v} \setminus K_v$.
Therefore, it follows that
\begin{align*}
w_v \enskip & = \enskip c_v\cdot z_v + \sum_{e \in \INCEDGESET{v}}d_e\cdot g_{e,v} \enskip = \enskip c_v\cdot z_v + \sum_{e \in K_v}d_e\cdot g_{e,v} \\[2pt]
& = \enskip \sum_{e \in K_v} d_e\cdot \BIGP{z_v+g_{e,v}} + \sum_{e \in H_v}\ell_v(e)\cdot z_v \\[3pt]
& = \enskip \sum_{e\in K_v} \ell_v(e) \cdot y_e + \sum_{e \in H_v}\ell_v(e)\cdot y_e \enskip = \enskip \sum_{e \in \INCEDGESET{v}}\ell_v(e)\cdot y_e,
\end{align*}
and condition~(c) holds as well.
This proves the lemma.
\end{proof}


%


\section{Augmented $\BIGP{k, (1+\frac{1}{k-1})(f-1)}$-Cover}

\medskip

\begin{ap_lemma}{\ref{lemma-VC-RHC-heavy-light-mixed-bound}}
We have $$\sum_{v \in V \setminus V_S}w_v\cdot x^{(h^*)}_v \le \BIGP{f-1}\cdot \sum_{v \in V_S}\sum_{e \in \INCEDGESET{v}}h^*_{e,v}\cdot y_e + f\cdot\sum_{v \in V \setminus V_S}\sum_{e \in \INCEDGESET{v}}h^*_{e,v}\cdot y_e.$$
\end{ap_lemma}

\begin{proof}
Consider any $v \in V \setminus V_S$ such that $x^{(h^*)}_v > 0$.
Depending on $c_v$, $\RECEIVED{h^*}{v}$, and $\RECEIVED{h}{v}$, we consider the following three exclusive cases separately:

\begin{enumerate}[(1)]
	\item
		If $\RECEIVED{h^*}{v} > c_v$, then we know that $\RECEIVED{h}{v} > c_v$. By Proposition~\ref{lemma_u_notin_s_heavy_unicost} we have 
		$$w_v\cdot x^{(h^*)}_v \enskip = \enskip c_v\cdot z_v \cdot \CEIL{\frac{\RECEIVED{h^*}{v}}{c_v}} \enskip \le \enskip 2\cdot \sum_{e\in \INCEDGESET{v}}h^*_{e,v}\cdot y_e.$$		
		In this case we charge the cost incurred by $v$ to the demand it serves, where each unit of demand, say, from edge $e$, gets a charge of $2\cdot y_e$.

	\item
		If $\RECEIVED{h}{v} > c_v \ge \RECEIVED{h^*}{v}$, then we know that $x^{(h^*)}_v = 1$.
		By Proposition~\ref{lemma_u_notin_s_heavy_unicost}, we have 
		$$w_v\cdot x^{(h^*)}_v \enskip = \enskip w_v \enskip = \enskip c_v\cdot z_v \enskip < \enskip \sum_{e \in \INCEDGESET{v}}h_{e,v}\cdot y_e,$$
		where the last inequality follows from the assumption that $\RECEIVED{h}{v} > c_v$.
		In this case we charge the cost of $v$ to the demand that was assigned to it in the original assignment $h$, where each unit demand gets a charge of $y_e$.

	\item
		If $c_v \ge \RECEIVED{h}{v}$, then we know that $h^*_{e,v} \le \ell_v(e)$ for all $e \in \INCEDGESET{v}$ by Lemma~\ref{claim-light-demands} and the way how $h^*$ is modified.
		Therefore, we have $x^{(h^*)}_v = 1$ and Lemma~\ref{claim-light-demands} states that
		$$w_v \cdot x^{(h^*)}_v = \sum_{e \in \INCEDGESET{v}}\ell_v(e)\cdot y_e.$$
		In this case, we charge the cost incurred by $v$ to the demand that is located in $\ell_v$, each of which gets a charge of $y_e$.
		%
		
		%
		%
		
		%
		%
		%
\end{enumerate}


\noindent
Consider any unit of demand from an edge $e \in E$ and the number of charges it gets in the above three cases.
Depending on the assignment $h^*$, we have the following three cases.
\begin{enumerate}[(a)]
	\item
		If the unit demand is assigned to a vertex in $V_S$, then it is charged at most $(f-1)$ times, i.e., at most once in case~(3) above by its remaining incident vertices.

	\item
		If the unit demand is assigned to a vertex $v \in V \setminus V_S$ with $\RECEIVED{h^*}{v} > c_v$, then it is charged twice in case~(1) above by $v$.

	\item
		If the unit demand is assigned to a vertex $v \in V \setminus V_S$ with $\RECEIVED{h^*}{v} \le c_v$, then it is charged at most $f$ times, i.e., at most once by all of its incident vertices in case~(2) and~(3) above.
\end{enumerate}

\noindent
Since $f \ge 2$, summing up the discussion and we obtain the statement as claimed.
\end{proof}

\end{appendix}

\end{document}